\documentclass[10pt]{article}
\usepackage{graphicx, verbatim, url, amssymb, amsmath, amsfonts, amsthm, latexsym, enumerate}
\newtheorem{definition}{Definition}

\newtheorem{example}{Example}
\newtheorem{corollary}{Corollary}
\newtheorem{lemma}{Lemma}
\newtheorem{theorem}{Theorem}

\begin{document}

\title{Arbitrarily regularizable graphs}

\author{Enrico Bozzo \\
Department of Mathematics, Computer Science and Physics \\ 
University of Udine \\
\url{enrico.bozzo@uniud.it} \and 
Massimo Franceschet\\
Department of Mathematics, Computer Science and Physics \\ 
University of Udine \\
\url{massimo.franceschet@uniud.it}}

\maketitle

\begin{abstract}
A graph is regularizable if it is possible to assign weights to its edges so that all nodes have the same degree. Weights can be positive, nonnegative or arbitrary as soon as the regularization degree is not null. Positive and nonnegative regularizable graphs have been thoroughly investigated in the literature. In this work, we propose and study arbitrarily regularizable graphs. In particular, we investigate necessary and sufficient regularization conditions on the topology of the graph and of the corresponding adjacency matrix. Moreover, we study the computational complexity of the regularization problem and characterize it as a linear programming model.
\end{abstract}

\section{Introduction}\label{Introduction}

In this introduction, we first provide a colloquial account to the problem of regularizability. Next, we provide an application scenario to the regularization problem in the context of social and economic networks. Finally, we outline our contribution.

\subsection{An informal account}

Consider a network of nodes and edges. The nodes represent the network actors and the edges are the connections between actors. Another ingredient is the force of the connection between two connected actors. We can represent this force with a number, called the weight associated with each edge: the greater the weight, the stronger the relation between the actors linked by the edge. The \textit{strength} of a node as the sum of the weights of the edges connecting the node to some neighbor. The strength of a node gives a rough estimate of how much important is the node.

The \textit{regularization problem} we approach in this work is the following: given a network, is there an assignment of weight to edges such that all nodes have the same non-zero strength value? Hence, we seek for a way to associate a level of force with the connections among actors so that the resulting network is found to be egalitarian, that is, all players have the same importance.\footnote{In a regularized weighted network, all nodes have the same strength, the same eigenvector centrality and the same power, as defined in \cite{BF16}.} As an example, Figure \ref{fig:gasreg} depicts a subset of the European natural gas pipeline network. Nodes are European countries (country codes according to ISO 3166-1) and there is an undirected edge between two nations if there exists a natural gas pipeline that crosses the borders of the two countries. Edge weights, represented by widths of lines, are such such each country has the same strength.

Some networks, like star graphs, are not regularizable at all. Others, namely regular graphs, are immediately regularizable. In between, we have a hierarchy of graph classes: those networks that are regularizable with arbitrarily, possibly negative, weights, those that are regularizable with nonnegative, possibly null, weights, and those that are regularizable with strictly positive weights. In this paper, we  investigate the following two meaningful questions:

\begin{enumerate}
\item what is the topology of networks that are regularizable?
\item if a network is regularizable, how do we find the regularization weights for edges and how complex is to find them?
\end{enumerate}

\begin{figure}[t]
\begin{center}
\includegraphics[scale=0.4, angle=0]{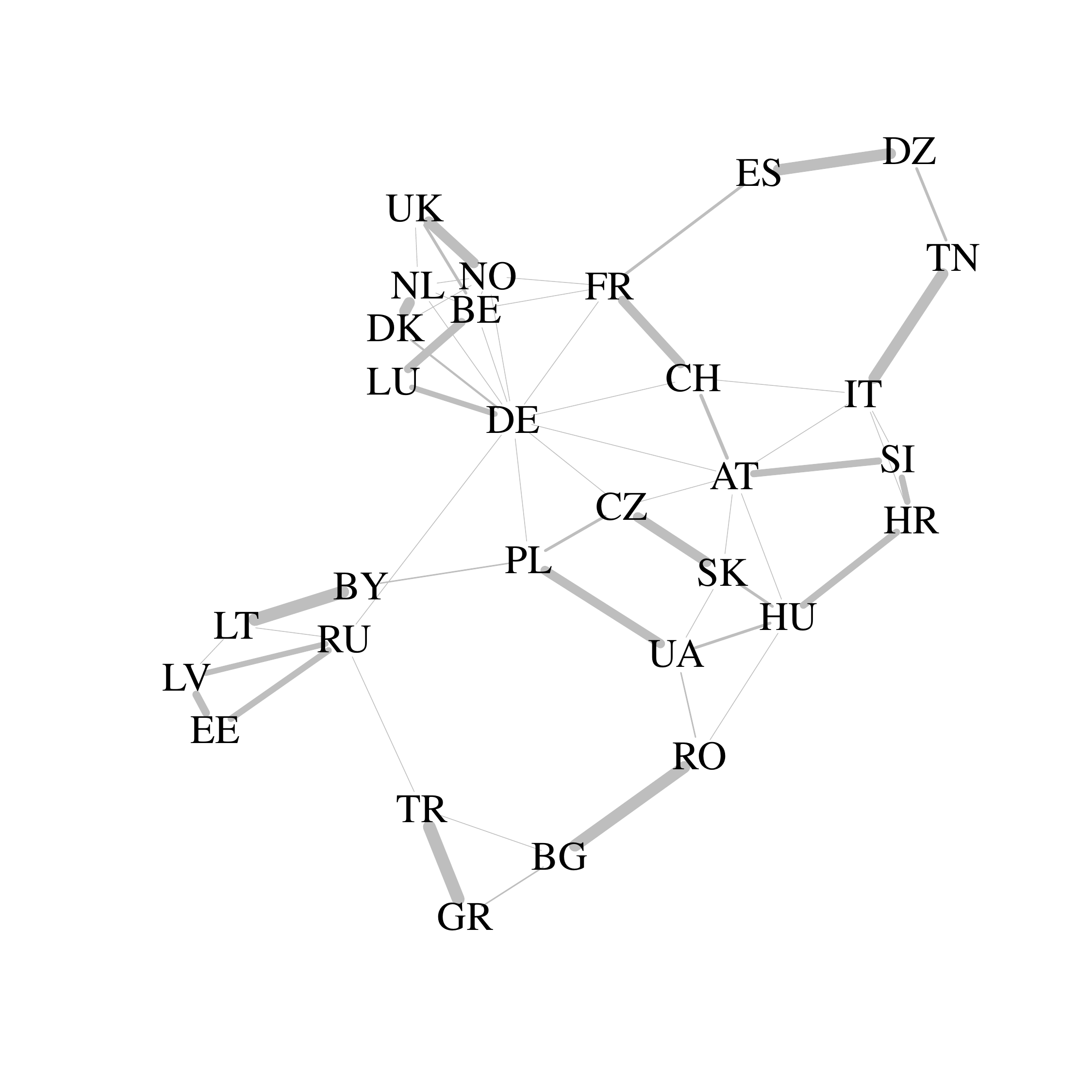}
\end{center}
\caption{The European natural gas pipeline network. Edge width, proportional to edge weight, is such that each country has the same strength.}
\label{fig:gasreg}
\end{figure}

\begin{figure}[t]
\begin{center}
\includegraphics[scale=0.5, angle=0]{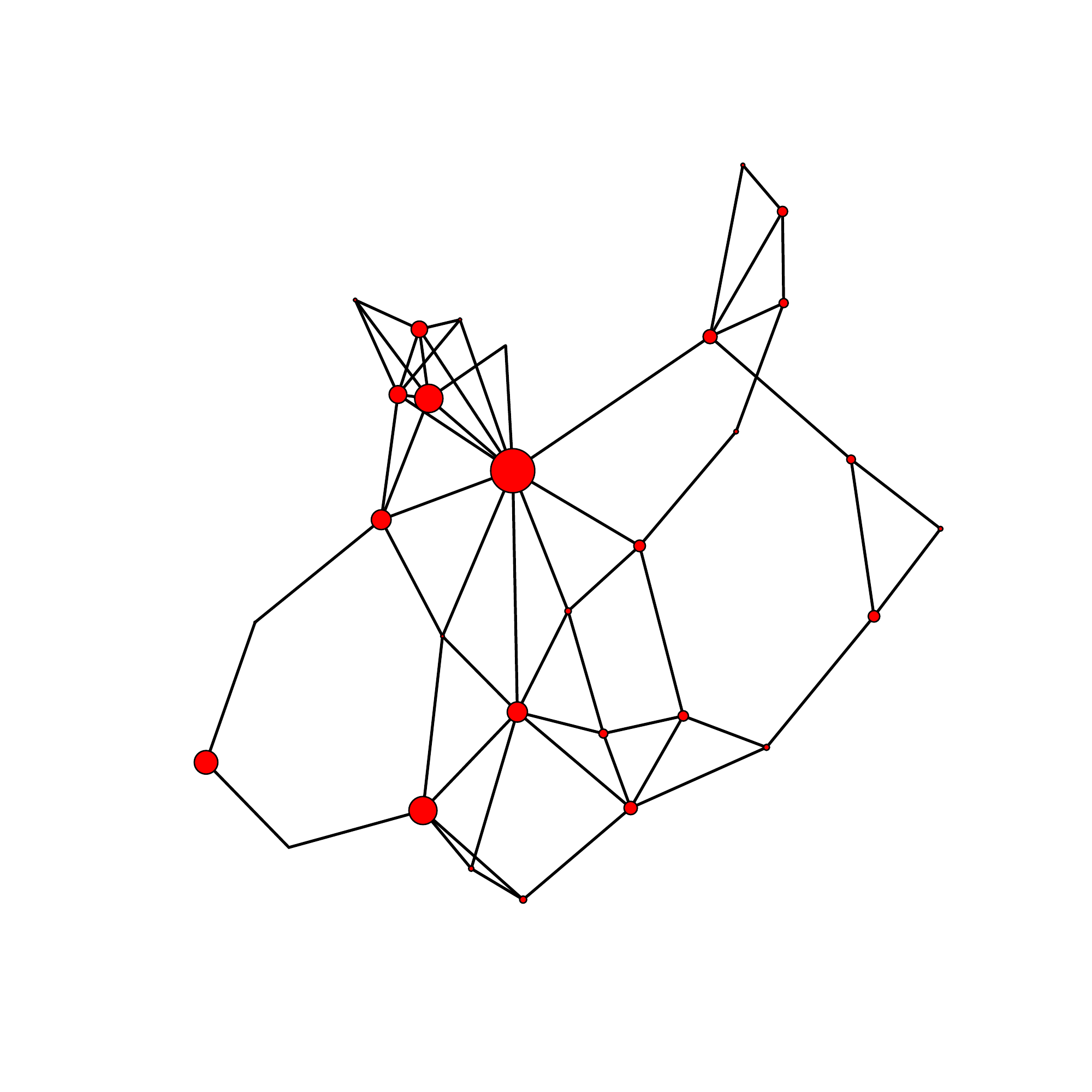}
\end{center}
\caption{The European natural gas pipeline network. Node size is proportional to its power.}
\label{fig:gasPower}
\end{figure}

\subsection{Application scenario}

In the following, inspired by \cite{EK10}, we define an application scenario for the general regularization problem in the context of social and economic networks. Consider a set of economic or social actors, linked with ties in a network. Each actor has the same (economic or social) capital to spend, say a unit of value. There are two constraints about how  to spend the capital: (i) actors can spend an amount of capital only on incident edges of the network; (ii) actors must spend the whole capital (not less, not more). The network has a \textit{power equilibrium} if each actor can reach an agreement with its neighbors on the amount of capital to spend on each edge. A network is regularizable if it has a power equilibrium.

For instance, consider the simple graph A -- B -- C. Both A and C have only one option to spend their unit of capital, namely B. Hence, they will never spend less than 1 on the edge linking with B. On the other hand, B can accept only one offer either from A or from C, because it cannot spend more than 1. Hence, no equilibrium is possible on this network, and thus it is not regularizable. If we add a link from C to A, closing the path into a cycle, the three actors can now spend 0.5 on each incident edge; the equilibrium is found and the network is regularizable. Notice that this is the only way to divide the capital that reach an equilibrium. 
Moreover, the graph A -- B -- C -- D  is regularizable but only if we allow
that the capital spend on the central edge is zero. This simple example shows that relaxing the constraints on the weights give us higher flexibility with respect to the graphs that we can regularize. More sophisticated examples of regularizability are given in Section \ref{inclusion}.

The described notion of power equilibrium is in fact intimately related with the notion of \textit{power} in networks. The study of power has a long history in economics (in its acceptation of bargaining power) and sociology (in its interpretation of social power) \cite{EK10,J08}. In his seminal work on power-dependance relations, dated 1962, Richard Emerson claims that power is a property of the social relation, not an attribute of the person: ``X has power'' is vacant, unless we specify ``over whom'' \cite{E62}. This type of relational power is endogenous with respect to the network structures, meaning that it is a function of the position of the node in the network. In particular, \cite{BF16} recently proposed a notion of power that claims that an actor is powerful if it is connected with powerless actors. It is precisely implemented in with the recursive equation $p = A \, p^{\div}$, where $p$ is the sought power vector, $A$ is a matrix encoding the network and $p^{\div}$ is the vector whose entries are the reciprocal of those of $p$. A non-trivial result proves that the equation defining power has a solution on a network if and only if the network has a power equilibrium in which the division of the capital on edges is strictly positive \cite{SK67,BF16}. Figure \ref{fig:gasPower} depicts the regularizable network of gas pipelines in Europe where node size is proportional to power as computed with the above equation.

\subsection{Our contribution}

The regularization problem sketched above is customarily defined on undirected graphs and constrained to nonnegative and positive weights \cite{B78,B81}. Nonnegative and positive regularizable graphs have been already studied in the literature, in particular in connection with the \textit{balancing problem}  of a nonnegative matrix $A$, namely the question of finding diagonal matrices $D_1$ and $D_2$ so that all the rows and columns of $P = D_1 A D_2$ sum to one \cite{KR13}. Some motivations for achieving this balance, and hence for studying nonnegative and positive regularizable graphs, include interpreting economic data \cite{B70}, understanding traffic circulation \cite{LS81}, assigning seats fairly after elections \cite{B08}, matching protein samples \cite{DMGMW09} and centrality measures in networks \cite{K08,BF16}.

As accounted in \cite{TCAL16}, a recent survey with an extensive bibliography, signed graphs, that are graphs with both positive and negative links, are now attracting increasing attention and disclosing promising research directions.
Signed undirected graphs have been used already in the 1940s as a representation and analysis tool in social psychology, leading to the definition of structural balance theory. A counterpart of balance theory for directed graphs is  status theory that in its most basic form suggests that a positive link from A to B or a negative link form B to A indicates that B has an higher status than A. In modern social media links can be directed and, if positive,  indicate friendship, trust, like or the opposite whereas negative. For example, Slashdot is a technology news website that publishes stories written by editors or submitted by users and allows users to comment on them. The site has the Zoo feature, which lets users tag other users as friends and foes. In constrast to most popular social networking services, Slashdot Zoo is one of the few sites that also allows users to rate other users negatively \cite{KLB09}. In this scenario, research problems have evolved from theory development to mining tasks, very often requiring dedicated methods with respect to unsigned networks.

In this paper, we generalize the problem of regularization to arbitrary regularizable graphs, meaning that the regularization weights can be arbitrary, possibly negative, numbers. We find necessary and sufficient regularization conditions on the graph and on the corresponding adjacency matrix. To this end, we define a graph-theoretic notion of alternating path and show that it corresponds to a notion of chain on adjacency matrices \cite{HM75}. We study the computational complexity of the problem of deciding whether a graph is regularizable and model the problem of finding the regularization weights as a linear programming feasibility problem.

The rest of paper is organized as follows. In Section \ref{hierarchy} we present the regularization hierarchy and review some known results on regularizability with nonnegative and positive weights. In Section \ref{NRG} we investigate arbitrary regularizable graphs. In Section \ref{inclusion} we show that the inclusion relation among the graph classes is strict. Section \ref{complexity} is devoted to computational issues. In Section \ref{related} we present the related literature. Conclusions are drawn in Section~\ref{conclusion}.

\section{The regularization hierarchy} \label{hierarchy}

In this section we define the regularization hierarchy of graph classes and review nonnegative and positive regularizable graphs.

Let $A$ be a square $n \times n$ matrix. We denote with $G_A$ the graph whose adjacency matrix is $A$, that is, $G_A$ has $n$ nodes numbered from $1$ to $n$ and it has an edge $(i,j)$ from node $i$ to node $j$ with weight $a_{i,j}$ if and only if $a_{i,j} \neq 0$. Any square  matrix $A$ corresponds to a weighted graph $G_A$ and any weighted graph $G$, once its nodes have been ordered, corresponds to a matrix $A_G$. A \textit{permutation} matrix $P$ is a square matrix such that each row and each column of $P$ contains exactly one entry equal to $1$ and all other entries are equal to $0$. The two graphs $G_A$ and $G_{P^TAP}$ are said to be isomorphic since they differ only in the way their nodes are ordered.

If $A$ and $B$ are square matrices of the same size, then we write $A \subseteq B$ if $a_{i,j} \neq 0$ implies $b_{i,j} \neq 0$, that is, the set of non-zero entries of $A$ is a subset of the set of non-zero entries of $B$. If $A \subseteq B$, then $G_A$ is a subgraph of $G_B$. We write $A \equiv B$ if both $A \subseteq B$ and $B \subseteq A$. Hence, $A \equiv B$ means that the two matrices have the same zero/nonzero pattern, and the graphs $G_A$ and $G_B$ have the same topological structure (they may differ only for the weighing of edges).
Given $r > 0$, a matrix $W$ with nonnegative entries is $r$-\emph{bistochastic} if $W e = W^T e = r e$, where we denote with $e$ the vector of all 1's. A 1-bistochastic matrix is simply called bistochastic.

Let $G$ be a graph with $n$ nodes and $m$ edges. We enumerate the edges of $G$ from $1$ to $m$. Let $U$ be the $n \times m$ out-edges incidence matrix such that $u_{i,l} = 1$ if $l$ corresponds to edge $(i,j)$ for some $j$, that is edge $l$ exits node $i$, and $u_{i,l} = 0$ otherwise.  Similarly, let $V$ be the $n \times m$ in-edges incidence matrix such that $v_{i,l} = 1$ if $l$ corresponds to edge $(j,i)$ for some $j$, that is edge $l$ enters node $i$, and $v_{i,u} = 0$ otherwise. Consider the following $2n \times m$ incidence matrix:

$$
B =
\begin{bmatrix}
    \, U \, \\
    \, V \, \\
\end{bmatrix}
$$

Let $w = (w_1, \ldots, w_m)$ be the vector of edge weight variables and $r$ be a variable for the regularization degree. The regularization linear system is as follows:

\begin{eqnarray}
B w = r e \label{eq:system}
\end{eqnarray}

If $G$ in an undirected graph (that is, its adjacency matrix is symmetric), then there is no difference between in-edges and out-edges. In this case, the incidence matrix $B$ of $G$ is an $n \times m$ matrix such that $b_{i,l} = 1$ if $i$ belongs to edge $l$ and $b_{i,l} = 0$ otherwise.

Notice that system (\ref{eq:system}) has always the trivial solution $(w,r) = 0$.
The set of non-trivial solutions of system (\ref{eq:system}) induces the following \textit{regularization hierarchy} of graphs:

\begin{itemize}
\item \textit{Arbitrarily regularizable graphs}: those
graphs for which there exists at least one solution $w \neq 0$ and  $r > 0$ of system (\ref{eq:system}). Notice that $w$ can contain negative components but the regularization degree must be positive.

\item \textit{Nonnegatively regularizable graphs}: those for which there exists at least one solution of system (\ref{eq:system}) such that $w$ has nonnegative entries and  $r > 0$.

\item \textit{Positively regularizable graphs}: those for which there exists at least one solution of system (\ref{eq:system}) such that $w$ has positive entries and $r > 0$.

\item \textit{Regular graphs}: those graphs for which $w = e$ and $r > 0$ is a solution of system (\ref{eq:system}).

\end{itemize}

Clearly, a regular graph is a positively regularizable graph, a positively regularizable graph is a nonnegatively regularizable graph, and a nonnegatively regularizable graph is an arbitrarily regularizable graph.
In Section \ref{inclusion} we show that this inclusion is strict, meaning that each class is properly contained in the previous one, for both undirected and directed graphs.

\subsection{Positively regularizable graphs} \label{PRG}

Informally, a graph is positively regularizable
if it becomes regular by weighting its edges with positive values. More precisely, if $G$ is a graph and $A_G$ its adjacency matrix, then graph $G$ (or its adjacency matrix $A_G$) is \emph{positively regularizable} if there exists $r > 0$ and an $r$-bistochastic  matrix $W$ such that $W \equiv A_G$.
A matrix $A$ has \emph{total support} if $A \neq 0$ and for every pair $i,j$ such that $a_{i,j} \neq 0$ there is a permutation matrix $P$ with $p_{i,j} = 1$ such that $P \subseteq A$. Notice that a permutation matrix $P$ corresponds to a graph $G_P$ whose strongly connected components are cycles of length greater than or equal to 1. We call such a graph a (directed) \textit{spanning cycle forest}. Hence, a matrix $A$ has total support if each edge of $G_A$ is contained in a spanning cycle forest of $G_A$.

The following result can be found in \cite{PM65}. For the sake of completeness, we give the proof in our notation.

\begin{theorem} \label{th:regularizability}
Let $A$ be a square matrix. Then $A$ is positively regularizable if and only if $A$ has total support.
\end{theorem}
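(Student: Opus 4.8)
The plan is to prove the two directions separately, using the Birkhoff–von Neumann theorem as the main engine for the reverse implication and a direct averaging/decomposition argument for the forward one.

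For the forward direction, suppose $A$ is positively regularizable, so there exist $r>0$ and an $r$-bistochastic matrix $W$ with $W \equiv A$. First I would note that $W$ is nonzero (since $r>0$), hence so is $A$. Rescaling, $W/r$ is bistochastic. By the Birkhoff–von Neumann theorem, $W/r$ can be written as a convex combination $\sum_k \lambda_k P_k$ of permutation matrices $P_k$ with $\lambda_k > 0$. Now fix a pair $i,j$ with $a_{i,j} \neq 0$; since $W \equiv A$ we have $w_{i,j} \neq 0$, so $(W/r)_{i,j} > 0$, which forces at least one $P_k$ in the decomposition to have $(P_k)_{i,j} = 1$. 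Since each $P_k \subseteq W \equiv A$ (every nonzero entry of $P_k$ contributes a positive amount to the corresponding entry of $W/r$, which must therefore be positive, hence nonzero in $A$), this $P_k$ is the permutation matrix witnessing total support for the pair $i,j$. Hence $A$ has total support.

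For the reverse direction, suppose $A$ has total support. For each pair $(i,j)$ with $a_{i,j}\neq 0$, let $P_{i,j}$ be a permutation matrix with $(P_{i,j})_{i,j}=1$ and $P_{i,j}\subseteq A$. Let $W$ be the sum of all these (finitely many) permutation matrices. Since each $P_{i,j}$ is doubly stochastic with row and column sums $1$, the sum $W$ satisfies $We = W^Te = re$ where $r$ is the number of pairs summed — in particular $r>0$ and $W$ has nonnegative entries, so $W$ is $r$-bistochastic. It remains to check $W \equiv A$: the inclusion $W \subseteq A$ holds because each $P_{i,j}\subseteq A$ and a sum of nonnegative matrices has its support equal to the union of the supports; conversely, for any $i,j$ with $a_{i,j}\neq 0$ the term $P_{i,j}$ contributes $1$ to the $(i,j)$ entry of $W$, so $w_{i,j}\geq 1 > 0$, giving $A \subseteq W$. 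Thus $W \equiv A$ and $A$ is positively regularizable.

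I expect the main subtlety to lie in the forward direction, specifically in correctly invoking Birkhoff–von Neumann and tracking the support containments $P_k \subseteq A$ through the convex decomposition; the reverse direction is essentially a bookkeeping argument once one thinks of summing rather than averaging the permutation matrices. One should also take a moment to confirm the edge-cases ($A$ nonzero, $r>0$ strictly) are handled, since the definition of total support explicitly excludes the zero matrix and the regularization hierarchy requires $r>0$.
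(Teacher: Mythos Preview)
Your proof is correct and follows essentially the same approach as the paper: in the forward direction you rescale to a bistochastic matrix and apply Birkhoff--von Neumann to extract a permutation matrix through each nonzero entry, and in the reverse direction you sum the witnessing permutation matrices over all nonzero entries to obtain an $r$-bistochastic $W$ with $W\equiv A$. The paper's argument is identical in structure, differing only in notation (it writes $W=\sum_{u\in E}P_u$ over the set $E$ of nonzero entries and notes $r=|E|$).
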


\begin{proof}
If $A$ is positively regularizable there exists $r > 0$ and an $r$-bistochastic $W$ such that $W\equiv A$. Clearly $\overline{W} = (1/r) W$ is bistochastic and has the same pattern of $A$. By Birkhoff theorem,  see for example \cite{HJ13}, we obtain $\overline{W} = \sum_i \alpha_i P_i$, where every $\alpha_i > 0$, $\sum_i \alpha_i = 1$ and every $P_i$ is a different permutation matrix. Hence, for every  $i,j$ such that $\overline{w}_{i,j} > 0$ there exists some permutation matrix $P_k$ such that $[P_k]_{i,j} = 1$ and $P_k \subseteq \overline{W}$. Since $\overline{W}\equiv A$, we conclude that $A$ has total support.

On the other hand, suppose that matrix $A$ has total support. Let $E$ be the set of non-zero entries of $A$. Then, for every non-zero entry $u \in E$ there is a permutation matrix $P_u$ with $[P_u]_{i,j} = 1$ and $P_u \subseteq A$. Let $W = \sum_{u \in E} P_u$. Notice that $W$ is nonnegative and has the same pattern than $A$. Moreover, for every $P_u$ it holds $P_u e = P_{u}^{T} e = e$, that is $P_u$ is bistochastic. Thus $W e = W^T e = m e$, where $m = |E|$, that is, $W$ is $m$-bistochastic. We conclude that $A$ is positively regularizable.
\end{proof}

From Theorem \ref{th:regularizability} and definition of total support, it follows that a graph is positively regularizable if and only if each edge is included in a spanning cycle forest. Moreover, from the proof of Theorem \ref{th:regularizability} it follows that if a graph is positively regularizable then there is a solution of the regularization system (\ref{eq:system}) with integer weights.

We now switch to the undirected case, which corresponds to symmetric adjacency matrices. Let $Q = P + P^T$, with $P$ a permutation matrix. Each element $q_{i,j}$ is either $0$ (if both $p_{i,j} = 0$ and $p_{j,i} = 0$), 1 (if either $p_{i,j} = 1$ or $p_{j,i} = 1$ but not both), or 2 (if both $p_{i,j} = 1$ and $p_{j,i} = 1$).  Notice that $Q$ is symmetric and $2$-bistochastic. Moreover, $Q$ corresponds to an undirected graph $G_Q$ whose connected components are single edges or cycles (including loops, that are cycles of length 1). We call these graphs (undirected) \textit{spanning cycle forests}. For symmetric matrices, we have the following:

\begin{corollary} \label{th:regularizability2}
Let $A$ be a symmetric square matrix. Then $A$ has total support if and only if for every pair $i,j$ such that $a_{i,j} > 0$ there is a matrix $Q=P+P^T$, with $P$ a permutation matrix, such that $q_{i,j} > 0$ and $Q \subseteq A$.

\end{corollary}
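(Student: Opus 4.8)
The plan is to unwind both directions directly from the definition of total support, with no appeal to Theorem~\ref{th:regularizability}: this is purely a statement about zero/nonzero patterns. Two elementary observations do all the work. First, $P \subseteq P + P^{T}$ for any permutation matrix $P$, since both summands have nonnegative entries, so every nonzero entry of $P$ (and of $P^{T}$) survives in the sum. Second, if $A$ is symmetric then $P \subseteq A$ implies $P^{T} \subseteq A$, because transposition only swaps the roles of $i$ and $j$ and $a_{i,j} = a_{j,i}$. Note that the set $\{Q : Q = P + P^{T},\ P \text{ a permutation matrix}\}$ is closed under transposition and that each such $Q$ is exactly the $2$-bistochastic, symmetric object described just before the corollary (its graph $G_Q$ is an undirected spanning cycle forest).

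For the forward implication, assume $A$ has total support and fix a pair $i,j$ with $a_{i,j} \neq 0$. By definition there is a permutation matrix $P$ with $p_{i,j} = 1$ and $P \subseteq A$. Put $Q = P + P^{T}$. Then $q_{i,j} = p_{i,j} + p_{j,i} \geq p_{i,j} = 1 > 0$. Moreover $P \subseteq A$ by hypothesis and $P^{T} \subseteq A$ by the symmetry observation, so every nonzero entry of $Q$ is a nonzero entry of $P$ or of $P^{T}$, hence of $A$; that is, $Q \subseteq A$. Thus $Q$ is the required witness.

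For the converse, assume that for every $i,j$ with $a_{i,j} \neq 0$ there is some $Q = P + P^{T}$ with $q_{i,j} > 0$ and $Q \subseteq A$; we must exhibit a permutation matrix through entry $(i,j)$ contained in $A$. Since $q_{i,j} = p_{i,j} + p_{j,i} > 0$ and $p_{i,j}, p_{j,i} \in \{0,1\}$, at least one of $p_{i,j}, p_{j,i}$ equals $1$. If $p_{i,j} = 1$, then $P$ itself works, because $P \subseteq Q \subseteq A$. If instead $p_{j,i} = 1$, then $P^{T}$ is a permutation matrix with $[P^{T}]_{i,j} = 1$, and $P^{T} \subseteq Q \subseteq A$. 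Either way the total-support condition is met at $(i,j)$; and $A \neq 0$ holds trivially (otherwise there is no such pair and the statement is vacuous on both sides). This completes the proof.

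There is no real obstacle here: the only points requiring care are the bookkeeping of the containment relation $\subseteq$ under sums and transposes of $0/1$ matrices, and pinning down exactly where symmetry of $A$ enters — namely only in the forward direction, to pass from $P \subseteq A$ to $P^{T} \subseteq A$ (the converse needs no symmetry at all).
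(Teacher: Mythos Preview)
Your proof is correct and follows essentially the same route as the paper's: in both directions one simply passes between $P$ and $Q=P+P^{T}$, using symmetry of $A$ only to obtain $P^{T}\subseteq A$ in the forward implication. If anything, your converse is cleaner than the paper's, since you explicitly split into the cases $p_{i,j}=1$ versus $p_{j,i}=1$ and choose $P$ or $P^{T}$ accordingly, whereas the paper's one-line converse is a bit elliptical on this point.
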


\begin{proof}
If $A$ has total support then for every pair $i,j$ such that $a_{i,j} > 0$ there is a  permutation $P$ such that $p_{i,j} = 1$ and $P \subseteq A$. Let $Q = P + P^T$. Hence $q_{i,j} > 0$ and since $A$ is symmetric $Q \subseteq A$. On the other hand, if for every pair $i,j$ such that $a_{i,j} > 0$ there is a matrix $Q = P + P^T$ such that $q_{i,j} > 0$ and $Q \subseteq A$, then $p_{i,j} > 0$ and $P \subseteq A$ (and $p_{j,i} > 0$ and $P^T \subseteq A$).

\end{proof}

Hence an undirected graph is positively regularizable if each edge is included in an undirected spanning cycle forest.

\subsection{Nonnegatively regularizable graphs} \label{NNRG}

Informally, a nonnegatively regularizable graph
is a graph that can be made regular by weighting its edges with nonnegative values. More precisely, if $G$ is a graph and $A_G$ its adjacency matrix, then graph $G$ is \emph{nonnegatively regularizable} if there exists $r > 0$ and an $r$-bistochastic matrix $W$ such that $W \subseteq A_G$.

A matrix $A$ has \emph{support} if there is a permutation matrix $P$ such that $P \subseteq A$. The following result is well-known, see for instance \cite{LP86}. For the sake of completeness, we give the proof in our notation.

\begin{theorem} \label{th:quasi-regularizability}
Let $A$ be a square matrix. Then $A$ is nonnegatively regularizable if and only if $A$ has support.
\end{theorem}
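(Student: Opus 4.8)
The plan is to mirror, in simplified form, the argument used for Theorem \ref{th:regularizability}, replacing "total support" by "support" and the role of every nonzero entry being covered by a permutation matrix by the mere existence of one such permutation matrix. Both directions reduce to Birkhoff's theorem together with the trivial observation that a permutation matrix is $1$-bistochastic.

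For the "only if" direction, suppose $A$ is nonnegatively regularizable, so that there exist $r > 0$ and an $r$-bistochastic matrix $W$ with $W \subseteq A$. Since $We = re \neq 0$, we have $W \neq 0$, and $\overline{W} = (1/r)W$ is bistochastic with the same zero/nonzero pattern as $W$, hence $\overline{W} \subseteq A$. By Birkhoff's theorem (as cited for Theorem \ref{th:regularizability}), $\overline{W} = \sum_i \alpha_i P_i$ with each $\alpha_i > 0$, $\sum_i \alpha_i = 1$, and each $P_i$ a permutation matrix. Fixing any index $k$, nonnegativity of all the terms forces $P_k \subseteq \overline{W} \subseteq A$. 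Thus $A$ has support.

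For the "if" direction, suppose $A$ has support, i.e.\ there is a permutation matrix $P$ with $P \subseteq A$. Then $P$ is nonnegative and satisfies $Pe = P^T e = e$, so $P$ is $1$-bistochastic; taking $r = 1$ and $W = P$ witnesses that $A$ is nonnegatively regularizable.

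I expect no real obstacle here: the only point requiring a word of care is observing that $r > 0$ guarantees $W \neq 0$, so that Birkhoff's theorem applies to a genuine doubly stochastic matrix; everything else is a direct unwinding of the definitions. As a closing remark, one can note (as after Theorem \ref{th:regularizability}) that this gives an integral solution $w$ of the regularization system (\ref{eq:system}) with $r = 1$, namely the indicator vector of the edges of the spanning cycle forest $G_P$.
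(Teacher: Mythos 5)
Your proof is correct and follows essentially the same route as the paper's: the ``only if'' direction normalizes $W$ to a bistochastic matrix and extracts a single permutation matrix from its Birkhoff decomposition, and the ``if'' direction simply takes $W = P$ with $r = 1$. The extra remarks (that $r>0$ forces $W\neq 0$, and that one obtains a $0$--$1$ integral solution) are consistent with the observations the paper makes after the theorem.
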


\begin{proof}
Suppose $A$ is nonnegatively regularizable. Then there is $r > 0$ and an $r$-bistochastic $W$ with $W \subseteq A$.  Since $r > 0$, we have that $\overline{W} = (1/r) W$ is  bistochastic and $\overline{W} \subseteq A$. Hence, by Birkhoff theorem, $\overline{W} = \sum_i \alpha_i P_i$, where every $\alpha_i > 0$, $\sum_i \alpha_i = 1$ and every $P_i$ is a different permutation matrix. Let $P_k$ be any permutation matrix in the sum that defines $\overline{W}$. Then $P_k \subseteq W \subseteq A$ and hence $P_k \subseteq A$. We conclude that $A$ has support.

On the other hand, suppose $A$ has support. Then there is a permutation matrix $P$ with $P \subseteq A$. Since $P$ is  bistochastic, we have that $A$ is nonnegatively regularizable.
\end{proof}

From Theorem \ref{th:quasi-regularizability} and definition of support it follows that a graph is nonnegatively regularizable if and only if it contains a spanning cycle forest. Furthermore, from the proof of Theorem \ref{th:quasi-regularizability} it follows that if a graph is nonnegatively regularizable then there exists a  solution of the regularization system with binary weights (0 and 1).

We now consider undirected graphs, that is, symmetric matrices. We have the following:
\begin{corollary} \label{th:regularizability3}
Let $A$ be a symmetric square matrix. Then $A$ has support if and only if there is a matrix $Q=P+P^T$, with $P$ a permutation matrix, with $Q \subseteq A$.
\end{corollary}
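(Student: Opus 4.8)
The plan is to prove the two implications separately, mirroring the structure of the proof of Corollary \ref{th:regularizability2} but dropping the edge-by-edge quantification, since here we only need an existence statement rather than a statement about every nonzero entry of $A$.

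For the forward direction, I would start from the hypothesis that $A$ has support, so by definition there is a permutation matrix $P$ with $P \subseteq A$. Set $Q = P + P^T$, which is of the required form. To check $Q \subseteq A$, the key observation is that symmetry of $A$ gives $P^T \subseteq A^T = A$, so both the nonzero pattern of $P$ and that of $P^T$ are contained in the nonzero pattern of $A$. Now if $q_{i,j} \neq 0$ then $p_{i,j} = 1$ or $p_{j,i} = 1$; in the first case $a_{i,j} \neq 0$ directly, and in the second case $a_{j,i} \neq 0$, hence $a_{i,j} = a_{j,i} \neq 0$ by symmetry. Thus $Q \subseteq A$.

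For the converse, I would assume such a matrix $Q = P + P^T \subseteq A$ is given. Since $p_{i,j} = 1$ implies $q_{i,j} \geq 1 \neq 0$, we have $P \subseteq Q \subseteq A$, so $P$ is a permutation matrix with $P \subseteq A$ and therefore $A$ has support by definition.

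I do not anticipate a genuine obstacle here: the statement is essentially a repackaging, in the symmetric case, of the definition of support, and the only point requiring care is the use of symmetry to move between $P$ and $P^T$ (exactly as in Corollary \ref{th:regularizability2}). For context one may add, after the proof, that combining this with Theorem \ref{th:quasi-regularizability} shows that an undirected graph is nonnegatively regularizable if and only if it contains an undirected spanning cycle forest; but that remark is not part of the argument.
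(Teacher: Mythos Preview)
Your proposal is correct and follows essentially the same approach as the paper's proof: for the forward direction you take $Q=P+P^T$ from a supporting permutation $P$ and use symmetry of $A$ to get $Q\subseteq A$, and for the converse you note $P\subseteq Q\subseteq A$. The only difference is that you spell out the entrywise case analysis where the paper simply writes ``Since $A$ is symmetric we have $Q\subseteq A$''; otherwise the arguments coincide.
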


\begin{proof}
If $A$ has support then there is a permutation $P$ with $P \subseteq A$. Let $Q = P + P^T$. Since $A$ is symmetric we have $Q \subseteq A$. On the other hand, if there is a matrix $Q = P + P^T$, with $P$ permutation and $Q \subseteq A$, then $P \subseteq A$ (and $P^T \subseteq A$).
\end{proof}

Hence an undirected graph is nonnegatively regularizable if and only if it contains an undirected spanning cycle forest. Moreover, if a graph is nonnegatively regularizable than there exists a regularization solution with weights 0, 1 and 2.

\section{Arbitrarily regularizable graphs} \label{NRG}
Informally, an arbitrarily regularizable graph is a graph that can be made regular by weighting its edges with arbitrarily values. More precisely, if $G$ is a graph and $A_G$ its adjacency matrix, then graph $G$ is \emph{negatively regularizable} if there exists $r > 0$ and a matrix $W$ (whose entries are not restricted to be nonnegative) such that  $We=W^Te=re$ and $W \subseteq A_G$.
Our goal here is to topologically characterize the class of arbitrarily regularizable graphs.

\subsection{The undirected case} \label{NRG:undirected}

We first address the case of undirected graphs (that is, symmetric adjacency matrices). The next result, see \cite{BA14}, will be useful.

\begin{lemma}\label{lem:rank}
Let $G$ be a connected undirected graph with $n$ nodes and let $B$ be the incidence matrix of $G$.
Then the rank of $B$ is $n-1$ if $G$ is bipartite and $n$ otherwise.
\end{lemma}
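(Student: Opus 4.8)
The plan is to prove this by analyzing the incidence matrix $B$ of a connected undirected graph via its kernel or, equivalently, via rank arguments on the underlying vector space. Recall that $B$ is $n \times m$ where $n$ is the number of nodes and $m$ the number of edges, with $b_{i,l} = 1$ precisely when node $i$ is an endpoint of edge $l$. The key observation is the standard connection with the signed (oriented) incidence matrix: if we pick an arbitrary orientation of each edge and form the oriented incidence matrix $B_o$ (entries $+1$ and $-1$ at the two endpoints), then it is classical that $\operatorname{rank}(B_o) = n - 1$ for a connected graph, since the left kernel of $B_o$ is spanned by $e$ (each column sums to zero) and nothing else when $G$ is connected. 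The unsigned matrix $B$ differs from $B_o$ by multiplying some rows — no, that is not quite it; it differs by changing signs of some entries within columns. The cleanest route is to reason directly about the left kernel of $B$.

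\textbf{Key steps.} First I would show that $\operatorname{rank}(B) \geq n-1$ always, for instance by exhibiting $n-1$ linearly independent columns using a spanning tree of $G$, or equivalently by noting that the left null space has dimension at most $1$: suppose $x^T B = 0$, i.e. for every edge $l = (i,j)$ we have $x_i + x_j = 0$. Walking along any path in the connected graph, this forces $x_j = -x_i$ across each edge, so $x$ is determined up to a sign flip by its value on one fixed node — hence the left kernel is at most one-dimensional, giving $\operatorname{rank}(B) \geq n-1$. Second, I would determine exactly when a nonzero vector $x$ with $x_i + x_j = 0$ on every edge exists. Traversing a cycle of length $k$ forces $x$ to alternate sign, returning with a factor $(-1)^k$; consistency requires $(-1)^k = 1$, i.e. every cycle has even length, which is precisely the condition that $G$ is bipartite. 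Thus: if $G$ is bipartite with parts $S, T$, the vector $x$ with $x_i = 1$ on $S$ and $x_i = -1$ on $T$ lies in the left kernel, so $\operatorname{rank}(B) \leq n-1$, hence $= n-1$; if $G$ is not bipartite, an odd cycle forces $x = 0$, so the left kernel is trivial and $\operatorname{rank}(B) = n$.

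\textbf{Main obstacle.} The only slightly delicate point is rigorously passing from the local relation ``$x_i = -x_j$ on each edge'' to the global dichotomy — one must argue that connectedness propagates the constraint to all nodes (so the kernel is at most $1$-dimensional and the sign pattern is forced) and that the existence of a consistent global assignment is equivalent to the absence of odd cycles. This is exactly the standard two-coloring argument for bipartiteness, dressed in linear-algebraic language, so it is routine but should be stated carefully. Since the statement cites \cite{BA14}, I would also simply invoke that reference and keep the argument brief. As a final remark, one can alternatively derive this from the formula $\operatorname{rank}(B) = n - b$ where $b$ is the number of bipartite connected components, specialized to the connected case; the eigenvalue-based proof (relating $B B^T$ to $D + A$, the signless Laplacian, whose smallest eigenvalue is $0$ iff a component is bipartite) is another valid route but is heavier than needed here.
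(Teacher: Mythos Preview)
Your proposal is correct; the left-kernel argument (the relation $x_i + x_j = 0$ along each edge, propagated by connectedness, with consistency around cycles giving the bipartite dichotomy) is the standard proof and is exactly what appears in Bapat's book. Note, however, that the paper does not supply its own proof of this lemma at all: it simply states the result and cites \cite{BA14}, so there is nothing to compare against beyond observing that your argument matches the reference the authors invoke.
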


First of all, notice that an undirected graph is arbitrarily regularizable (resp., nonnegatively, positively) if and only if all its connected components are so. It follows that we can focus on undirected graphs that are connected. Let $V$ be the set of the $n$ nodes of an undirected connected graph $G$. If $G$ is bipartite then $V$ can be partitioned into two subsets $U$ and $W$ such that each edge connects a node in $U$ with a node in $W$. If $|U|=|W|$ the bipartite graph is said to be \textit{balanced}, otherwise it is called \textit{unbalanced}. Let us introduce a vector $s$, that we call the \textit{separating vector}, where the entries corresponding to the nodes of $U$ are equal to $1$ and the entries corresponding to the nodes of $V$ are equal to $-1$. Clearly we have that $s^T B=0$, where $B$ is the incidence matrix of the graph. We have the following result:

\begin{theorem} \label{th:negreg2}
Let $G$ be an undirected connected graph. Then:

\begin{enumerate}
\item if $G$ is not bipartite, then $G$ is arbitrarily regularizable;
\item if $G$ is bipartite and balanced, then $G$ is arbitrarily regularizable;
\item if $G$ is bipartite and unbalanced, then $G$ is not arbitrarily regularizable.

\end{enumerate}

\end{theorem}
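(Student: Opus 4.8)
The plan is to translate arbitrary regularizability of a connected undirected graph $G$ into a statement about the range of its incidence matrix $B$, and then read everything off Lemma~\ref{lem:rank}. By the definition in Section~\ref{hierarchy}, $G$ is arbitrarily regularizable precisely when the system $Bw=re$ (system~(\ref{eq:system})) has a solution with $r>0$; note that such a solution automatically has $w\neq 0$, since $Bw=re\neq 0$. Dividing $w$ by $r$, this is equivalent to $e\in\mathrm{range}(B)$, i.e.\ to $e$ being orthogonal to $\ker B^T$. (If one prefers to start from the matrix formulation ``$W\subseteq A_G$, $We=W^Te=re$'', one first replaces $W$ by its symmetrization $(W+W^T)/2$, which keeps the pattern inside that of the symmetric matrix $A_G$ and keeps the row/column sums equal to $r$, and then reads $W$ off as an edge weighting; this shows the two formulations agree.) Thus the theorem reduces entirely to describing $\ker B^T$, and this is exactly what the rank computation in Lemma~\ref{lem:rank} delivers. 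I would also recall the preliminary reduction, already noted before the statement, that a graph is arbitrarily regularizable iff each of its connected components is, which justifies assuming $G$ connected.

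First I would handle case~1. If $G$ is not bipartite, Lemma~\ref{lem:rank} gives $\mathrm{rank}(B)=n$, so $B$ has full row rank, $\mathrm{range}(B)=\mathbb{R}^n$, and in particular $e\in\mathrm{range}(B)$; hence $G$ is arbitrarily regularizable.

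For cases~2 and~3, $G$ is bipartite, and Lemma~\ref{lem:rank} gives $\mathrm{rank}(B)=n-1$, so $\ker B^T$ is one-dimensional. The separating vector $s$ (entries $+1$ on $U$, $-1$ on $W$) is nonzero and satisfies $s^TB=0$, so $\ker B^T=\mathrm{span}(s)$. Consequently $e\in\mathrm{range}(B)$ iff $e^Ts=0$ iff $|U|-|W|=0$, i.e.\ iff $G$ is balanced. This settles both remaining cases simultaneously: balanced bipartite graphs are arbitrarily regularizable, unbalanced bipartite graphs are not.

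I do not expect a genuine obstacle here once Lemma~\ref{lem:rank} is available; the only points requiring a little care are the opening reduction — that $r$ may be normalized to a positive value and that the matrix and edge-weight descriptions coincide via symmetrization — and the observation that $s$ spans, rather than merely lies in, $\ker B^T$, which is where the bipartite rank bound $n-1$ is essential (it is precisely this that forces the ``not regularizable'' conclusion in case~3).
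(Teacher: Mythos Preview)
Your proof is correct and uses the same two ingredients as the paper --- Lemma~\ref{lem:rank} and the separating vector $s$ --- but packages them more cleanly. You argue abstractly: arbitrary regularizability is equivalent to $e\in\mathrm{range}(B)=(\ker B^T)^\perp$, and then Lemma~\ref{lem:rank} tells you $\ker B^T$ is either trivial (non-bipartite case) or spanned by $s$ (bipartite case), settling all three items at once via $e^Ts=|U|-|W|$. The paper instead constructs explicit solutions: in item~(1) it extracts an invertible $n\times n$ block $M$ from $B$ and sets $y=\begin{bmatrix}rM^{-1}e\\0\end{bmatrix}$; in item~(2) it extracts an invertible $(n-1)\times(n-1)$ block and uses $s^TB=0$ to verify the remaining row constraint; item~(3) is argued exactly as you do. The payoff of the paper's constructive route is that taking $r=|\det(M)|$ yields an integer weighting, a fact used immediately after the theorem; your range/kernel argument gives existence but not this integrality byproduct.
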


\begin{proof}

\noindent

We prove item (1). By virtue of Lemma \ref{lem:rank} the incidence matrix $B$ of $G$ has  rank equal to the number of its rows.
By permuting the columns of $B$ without loss of generality we can assume that:
$$B=\begin{bmatrix} M   & N
\end{bmatrix}$$
where $M$ is $n\times n$ and nonsingular and $N$ in $n\times (m-n)$. Let $x=r M^{-1}e$ be a vector of length $n$ and let
$$ y = \begin{bmatrix} x \\ 0 \end{bmatrix}$$ be a vector of length $m$ obtained by concatenating $x$ with a vector of 0s. Notice that $x \neq 0$, hence $y \neq 0$. Then
$$By = re $$
Hence the linear system $Bw = re$ has a nontrivial solution $y \neq 0$ for every $r > 0$ so that $G$ is arbitrarily regularizable.  If $r=|\det(M)|>0$ then the vector $y$ has integer entries, since the entries of $M$ are integers.

We prove item (2). If $G$ is bipartite and balanced, then $|V|=n$ is even and $|U|=|W|=n/2$.
By Lemma \ref{lem:rank} the rank of $B$ is $n-1$ and by permuting the rows and columns of $B$, without loss of generality, we can assume that:

$$B=\begin{bmatrix} M   & N \\
                   a^T  & b^T
\end{bmatrix}$$

where $M$ is $(n-1)\times (n-1)$ and nonsingular, $N$ is $(n-1) \times (m-n+1)$, $a^T$ is $1\times (n-1)$, and $b^T$ is $1 \times (m-n+1)$. Let $r > 0$ and $x=r M^{-1} e$ be a vector of length $n-1$, and $$ y = \begin{bmatrix} x \\ 0 \end{bmatrix}$$ be a vector of length $m$ obtained by concatenating $x$ with a vector of 0s. Notice that $x \neq 0$, hence $y \neq 0$. Then

$$By = \begin{bmatrix} re \\ a^Tx \end{bmatrix}$$

If $s$ is the separating vector, then $s^T B = 0$ so that

$$ s^T B y =  s^T \begin{bmatrix} re \\ a^Tx \end{bmatrix} = 0$$

Since half entries of $s$ are equal to $1$ and the remaining half are equal to $-1$, it must be
$a^T x = r$. Hence $B y = re$ so that $y \neq 0$ and $r>0$ is a solution of system (\ref{eq:system}). Hence $G$ is arbitrarily regularizable.
Again, if $r=|\det(M)|>0$ the vector $y$ has integer entries.

We prove item (3). If $G$ is bipartite and unbalanced, then $|U| \neq |W|$. Suppose $G$ is arbitrarily regularizable. Then $Bw = re$ where $w \neq 0$ and $r>0$. Let $s$ be the separating vector. Since $s^T B = 0$ we have that $s^T B w = r s^T e = 0$, that is $r(|U|-|W|)=0$. Since $G$ is unbalanced we have $|U|\neq |W|$ and hence it must be $r=0$. Hence $G$ is not arbitrarily regularizable.
\end{proof}

From the proof of Theorem \ref{th:negreg2} it follows that if a graph is arbitrarily regularizable then there exists regularization solution with integer weights. We recall that the same holds for nonnegatively and positively regularizable graphs. A \textit{tree} is an undirected connected acyclic graph. Since acyclic, a tree is bipartite. As a corollary, a tree is arbitrarily regularizable if and only if it is balanced as a bipartite graph. The next theorem points out that acyclic and cyclic graphs behave differently when they are not arbitrarily regularizable.

\begin{theorem} \label{th:negreg1}
Let $G$ be an undirected connected graph that is not arbitrarily regularizable and let $B$ be its incidence matrix. Then
\begin{enumerate}
\item if $G$ is acyclic then the system $B w = re$ has only the trivial  solution $w=0$ and $r=0$;
\item if $G$ is cyclic then the system $B w = re$ has infinite many solutions such that $w \neq 0$ and $r = 0$.
\end{enumerate}
\end{theorem}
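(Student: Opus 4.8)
The plan is to analyze the null space of the incidence matrix $B$ and use the rank information from Lemma~\ref{lem:rank}. The key observation is that $G$ is not arbitrarily regularizable exactly when the all-ones vector $e$ lies outside the column space of $B$ (equivalently, by Theorem~\ref{th:negreg2}, when $G$ is bipartite and unbalanced); in that case every solution of $Bw = re$ necessarily has $r = 0$, so $w$ lies in the null space of $B$. Thus both items reduce to understanding $\ker B$ when $G$ is bipartite and unbalanced.

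For item (1): if $G$ is acyclic, then $G$ is a tree, so it has $m = n-1$ edges. By Lemma~\ref{lem:rank}, since a tree is bipartite, $\operatorname{rank}(B) = n-1 = m$, so $B$ has full column rank and $\ker B = \{0\}$. Hence $Bw = re$ with $r = 0$ forces $w = 0$, giving only the trivial solution. (Alternatively, one can give the peeling argument sketched in the commented-out Theorem~\ref{th:tree}: a leaf edge must carry weight $0$, delete it, and induct.) For item (2): if $G$ is cyclic, then $m \geq n$; since $G$ is bipartite, $\operatorname{rank}(B) = n - 1$, so the null space of $B$ has dimension $m - (n-1) = m - n + 1 \geq 1$. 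Therefore $\ker B$ is a nontrivial subspace — in fact an infinite set since we are over $\mathbb{R}$ — and every nonzero $w \in \ker B$ together with $r = 0$ is a solution with $w \neq 0$. This gives infinitely many solutions with $w \neq 0$ and $r = 0$.

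I would organize the write-up by first noting that ``$G$ not arbitrarily regularizable'' plus Theorem~\ref{th:negreg2} implies $G$ is bipartite and unbalanced, and then invoking the separating-vector computation ($s^T B w = r\, s^T e = r(|U| - |W|)$, which with $|U| \neq |W|$ forces $r = 0$) to reduce to $w \in \ker B$; then the two items follow from the rank count above. A minor point worth stating explicitly is that in the cyclic bipartite case the dimension count gives $\dim \ker B \geq 1$, and over the reals any one-dimensional (or larger) subspace contains infinitely many vectors, which is why we get ``infinitely many'' rather than merely ``at least one'' nontrivial solution.

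The main obstacle is essentially bookkeeping rather than depth: one must be careful that Lemma~\ref{lem:rank} is being applied to the bipartite case (rank $n-1$) and to correctly match edge count with cyclicity — a connected graph is acyclic iff $m = n-1$ and cyclic iff $m \geq n$ — so that the nullity is $0$ in the first case and positive in the second. One should also make sure the reduction step is airtight: it is Theorem~\ref{th:negreg2}(1),(2) that guarantees a non-arbitrarily-regularizable connected graph must be bipartite (otherwise it would be regularizable), and it is the separating-vector argument of Theorem~\ref{th:negreg2}(3) that forces $r = 0$ for any solution, so no independent argument is needed here beyond citing those.
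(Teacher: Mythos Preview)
Your proposal is correct and follows essentially the same line as the paper's proof: both reduce to a rank/nullity count on $B$ using Lemma~\ref{lem:rank}, with the acyclic case giving full column rank ($m=n-1=\operatorname{rank}B$) and the cyclic case giving positive nullity. The only organizational difference is that the paper packages $(w,r)$ into a single homogeneous system $\hat{B}\hat{w}=0$ with $\hat{B}=[B\;\; -e]$ and argues directly from $\operatorname{rank}\hat{B}\le n<m+1$ in the cyclic case, whereas you first invoke Theorem~\ref{th:negreg2} and the separating vector to force $r=0$ and then work inside $\ker B$; both routes are equivalent and equally short.
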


\begin{proof}

Let $n$ be the number of nodes and $m$ be the number of edges of $G$. Consider the homogeneous linear system $\hat{B} \hat{w} = 0$, where
$\hat{B}=\begin{bmatrix}B & -e\end{bmatrix}$ is  $n \times (m+1)$  and $\hat{w} = \begin{bmatrix} w\\ r \end{bmatrix}$.
Notice that $w=0$ implies $r=0$. Since $G$ is not arbitrarily regularizable then either there is only one trivial solution $\hat{w} =0$ or the system has infinite many solutions $\hat{w}$ different from the null one with $w \neq 0$ and $r = 0$.
Since $G$ is connected, then $m \geq n-1$. We have that:
\begin{enumerate}
\item If $G$ is acyclic then $m=n-1$ and $\textrm{rank}(B) = n-1$ by virtue of Lemma \ref{lem:rank}. Hence the columns of $B$ are linearly independent so that the system $\hat{B} \hat{w} = 0$ cannot have solutions with $w \neq 0$ and $r = 0$.

\item When $G$ has cycles, we have indeed that $m \geq n$. But $\textrm{rank}(\hat{B}) \leq n$, since $\hat{B}$ has $n$ rows. Hence $\textrm{rank}(\hat{B}) \leq n \leq m < m+1$ so that the system $\hat{B} \hat{w} = 0$ has infinite many solutions.
\end{enumerate}

\end{proof}

Hence graphs that are not arbitrarily regularizable can be partitioned in two classes: unbalanced trees, for which the only solution of system $B w = re$ is the trivial null one, and cyclic bipartite unbalanced graphs, for which there are infinite many solutions with $w \neq 0$ and $r = 0$. For instance, consider the chair graph with undirected edges $(x_1,x_2)$, $(x_2,x_3)$, $(x_3,x_4)$, $(x_4,x_1)$, $(x_5,x_1)$. It is cyclic, bipartite and unbalanced. If $(x_1,x_2)$ and $(x_3,x_4)$ are labelled with $\alpha > 0$, $(x_2,x_3)$ and $(x_4,x_1)$ are labelled with $-\alpha$, and $(x_5,x_1)$ is labelled with $0$, then all nodes have degree 0.

\subsection{The directed case} \label{NRG:directed}

We now address to the case of directed graphs.
Consider the following mapping from directed graphs to undirected graphs. If $G$ is a directed graph, let $G^*$ be its undirected counterpart such that each node $x$ of $G$ corresponds to two nodes $x_1$ (with color white) and $x_2$ (with color black) of $G^*$, and each directed edge $(x,y)$ in $G$ corresponds to the undirected edge $(x_1, y_2)$. Notice that $G^*$ is a bipartite graph with $2n$ nodes ($n$ white nodes and $n$ black nodes) and $m$ edges that tie white and black nodes together. Moreover, the degree of the white node $x_1$ (resp., black node $x_2$) of $G^*$ is the out-degree (resp., the in-degree) of the node $x$ in $G$.

Despite $G$ is weakly or strongly connected, $G^*$ can have many connected components. However, we have the following:

\begin{theorem} \label{th:negreg3}
Let $G$ be a directed graph. Then $G$ is arbitrarily regularizable (resp., nonnegatively regularizable, positively regularizable) if and only if $G^*$ is arbitrarily regularizable (resp., nonnegatively regularizable, positively regularizable).
\end{theorem}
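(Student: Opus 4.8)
The plan is to show that the regularization system $Bw = re$ for $G$ and the corresponding system $B^* w^* = r e$ for $G^*$ are, up to a harmless relabeling of variables, the same linear system, and then invoke the characterizations already established for the undirected case (Theorems \ref{th:regularizability}, \ref{th:quasi-regularizability}, \ref{th:negreg2}) together with the fact that solutions with a prescribed sign pattern are preserved under this identification.

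First I would make the correspondence between edges explicit. By construction of $G^*$, the directed edge $l = (x,y)$ of $G$ is in bijection with the undirected edge $(x_1, y_2)$ of $G^*$; so both graphs have the same edge set, enumerated the same way, and a weight vector $w \in \mathbb{R}^m$ for $G$ can be read equally as a weight vector for $G^*$. Next I would compare the incidence matrices. The incidence matrix $B$ of the directed graph $G$ is the $2n \times m$ matrix $\begin{bmatrix} U \\ V \end{bmatrix}$, where $U$ is the out-edges incidence matrix and $V$ is the in-edges incidence matrix. On the other hand, $G^*$ has $2n$ nodes — the $n$ white nodes $x_1$ and the $n$ black nodes $x_2$ — and its (undirected) incidence matrix $B^*$ has a row for each such node. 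The row of $B^*$ indexed by the white node $x_1$ has a $1$ in column $l$ exactly when $l$ is incident to $x_1$ in $G^*$, i.e.\ exactly when $l = (x,y)$ for some $y$, i.e.\ exactly when edge $l$ exits $x$ in $G$; this is precisely the row of $U$ indexed by $x$. Symmetrically, the row of $B^*$ indexed by the black node $x_2$ coincides with the row of $V$ indexed by $x$. Hence, after ordering the nodes of $G^*$ as (all white, then all black), we get $B^* = B$. (If one prefers a different node ordering for $G^*$, the two matrices differ only by a row permutation, which does not affect any of the statements below.)

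With $B^* = B$, the regularization system $B^* w = r e$ of $G^*$ is literally the regularization system $B w = r e$ of $G$. Therefore a solution $(w,r)$ with $r>0$ and $w \ne 0$ exists for one system iff it exists for the other, giving the ``arbitrarily regularizable'' equivalence; a solution with $r>0$ and $w \ge 0$ exists for one iff for the other, giving the ``nonnegatively regularizable'' equivalence; and a solution with $r>0$ and $w > 0$ exists for one iff for the other, giving the ``positively regularizable'' equivalence. Since $G^*$ is undirected, on the $G^*$ side these are exactly the notions treated in Section~\ref{hierarchy} and Section~\ref{NRG:undirected}, which is the point of the reduction. I would also remark that the observation preceding the theorem — that $G^*$ may be disconnected even when $G$ is (weakly or strongly) connected — causes no trouble: regularizability of an undirected graph is equivalent to regularizability of each of its connected components (as already noted), so the equivalence is stated at the level of the whole graph $G^*$.

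The only genuinely delicate point is bookkeeping: making sure that the white/black bipartition of $G^*$ is aligned with the out/in block structure of $B$ so that the identification $B^* = B$ is exact (not merely up to permutation in a way that might scramble the sign constraints on $w$ — it does not, since any row permutation of $B$ leaves the column set, hence $w$, untouched). Once this alignment is spelled out, the proof is immediate. I do not expect any substantive obstacle beyond this verification.
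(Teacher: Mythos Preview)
Your proof is correct and follows essentially the same approach as the paper: the key observation in both is that, after ordering the white nodes of $G^*$ before the black ones, the undirected incidence matrix $B_{G^*}$ coincides with the directed incidence matrix $B_G = \begin{bmatrix} U \\ V \end{bmatrix}$, so the two regularization systems are literally the same and solutions (with any sign constraint on $w$) transfer verbatim. Your write-up is more explicit about the edge bijection, the row-by-row identification, and the harmlessness of row permutations, but the argument is the same.
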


\begin{proof}
The crucial observation is the following: if we order in $G^*$ the white nodes before the black nodes, then the incidence matrix $B_G$ of the directed graph $G$ (as defined in this section) is precisely the incidence matrix $B_{G^*}$ of $G^*$. It follows that $(w,r)$ is a solution of system (\ref{eq:system}) for $G$ if and only if $(w,r)$ is a solution of system (\ref{eq:system}) for $G^*$. Hence the thesis.
\end{proof}

Notice that the connected componentes of $G^*$ are bipartite graphs. Using Theorem \ref{th:negreg2}, we have the next result.

\begin{theorem} \label{th:negreg4}
Let $G$ be a directed  graph. Then $G$ is arbitrarily regularizable if and only if all connected components of $G^*$ are balanced (they have the same number of white and black nodes).
\end{theorem}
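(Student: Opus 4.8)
The plan is to chain together the two reductions already established. First, by Theorem~\ref{th:negreg3}, $G$ is arbitrarily regularizable if and only if the undirected graph $G^*$ is arbitrarily regularizable, so it suffices to characterize when $G^*$ is arbitrarily regularizable. Second, as noted at the beginning of Section~\ref{NRG:undirected}, an undirected graph is arbitrarily regularizable if and only if each of its connected components is. The point behind this last fact is that the incidence matrix of $G^*$ is block diagonal over its components, so the regularization system decouples; since the system is linear and homogeneous, a family of per-component solutions with positive degrees $r_i$ can be rescaled so that they all share a common positive degree $r$, and the converse direction is immediate. Hence $G^*$ is arbitrarily regularizable if and only if every connected component of $G^*$ is arbitrarily regularizable.

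Next I would observe that every connected component $C$ of $G^*$ is itself a bipartite graph: $G^*$ is bipartite by construction (white nodes on one side, black nodes on the other), and restricting this bipartition to $C$ exhibits $C$ as bipartite, with its two sides being the white and black nodes of $C$. Therefore Theorem~\ref{th:negreg2} applies to $C$: item~(1) is vacuous for $C$ since $C$ is bipartite, and items~(2) and~(3) together say precisely that $C$ is arbitrarily regularizable if and only if $C$ is balanced, i.e.\ has equally many white and black nodes.

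Combining the three observations yields the statement: $G$ is arbitrarily regularizable $\iff$ $G^*$ is arbitrarily regularizable $\iff$ every connected component of $G^*$ is arbitrarily regularizable $\iff$ every connected component of $G^*$ is balanced.

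I do not expect a deep obstacle here, since the substantive work already resides in Theorems~\ref{th:negreg2} and~\ref{th:negreg3}; the only points requiring a little care are the component reduction for a strictly positive regularization degree (handled by rescaling, as above) and the degenerate components of $G^*$. An isolated white node $x_1$ (which arises exactly when $x$ has out-degree $0$ in $G$) or an isolated black node $x_2$ (when $x$ has in-degree $0$) is a one-vertex bipartite graph, hence unbalanced, and this correctly certifies via the chain above that $G$ is not arbitrarily regularizable in that case, in agreement with the intuition that a node with no outgoing (or no incoming) edge cannot attain a nonzero strength.
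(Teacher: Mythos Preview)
Your argument is correct and follows essentially the same route as the paper, which simply remarks that the connected components of $G^*$ are bipartite and then invokes Theorem~\ref{th:negreg2} (with Theorem~\ref{th:negreg3} implicit). Your added care about rescaling per-component degrees to a common $r$ and about isolated white or black nodes being unbalanced single-vertex components just makes explicit details the paper leaves to the reader.
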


In the following, we provide two alternative versions of Theorem \ref{th:negreg4}, namely Theorems \ref{th:negreg5} and \ref{th:negregchar}, which serve to better characterize arbitrarily regularizable graphs. In the first rewriting of the theorem we will use the following graph-theoretic notion of alternating path.

\begin{definition} \label{def:path}
Let $G$ be a directed graph. A \textnormal{directed path} of length $k \geq 0$ is a sequence of directed edges of the form:

$$
(x_1, x_2), (x_2, x_3), (x_3, x_4), \ldots, (x_k, x_{k+1})
$$

An \textnormal{alternating path of type 1} of length $k \geq 0$ is a sequence of directed edges of the form:

$$
(x_1, x_2), (x_3, x_2), (x_3, x_4), (x_5, x_4), \ldots, (x_k, x_{k+1})
$$

If $x_1 = x_k$ we have an \textnormal{alternating cycle}.
An \textnormal{alternating path of type 2} of length $k \geq 0$ is a sequence of directed edges of the form:

$$
(x_2, x_1), (x_2, x_3), (x_4, x_3), (x_4, x_5), \ldots, (x_k, x_{k+1})
$$

If $x_1 = x_{k+1}$ we have an \textnormal{alternating cycle}.

\end{definition}

Observe that if we reverse the edges in even (resp., odd) positions of an alternating path of type 1 (resp., type 2) we get a directed path. Moreover, in simple graphs, an alternating cycle is either a self-loop or an alternating path of even length greater than or equal to 4. It is interesting to notice that if the edges of an alternating path are labelled with alternating signs, then the path induces, according to status theory \cite{TCAL16}, an ordering on the nodes of the path.

Let $G = (V,E)$ be a directed graph. We define an \textit{alternating path relation} $\leftrightarrow$ on the set $E$ of edges of $G$ such that for $e_1, e_2 \in E$, we have $e_1 \leftrightarrow e_2$ if there is an alternating path that starts with $e_1$ and ends with $e_2$. Notice that $\leftrightarrow$ is reflexive, symmetric and transitive, hence it is an equivalence relation. Thus $\leftrightarrow$ induces a partition of the set of edges $E = \bigcup_i E_i$ where $E_i$ are nonempty pairwise disjoint sets of edges. It is easy to realize that for each $i$ the edges of $E_i$ corresponds to the edges of some connected components of the undirected counterpart $G^*$ of $G$.  Each $E_i$ induces a subgraph $G_i = (V_i, E_i)$ of $G$. We say that a node in $G$ is a \textit{white} node if it has positive outdegree, it is a \textit{black} node if it has positive indegree, it is a \textit{source} node if it has null indegree, and it is a \textit{sink} node if it has null outdegree. Notice that a node can be both white and black, or neither white nor black; also, it can be both source and sink, or neither source nor sink.

We are now ready to prove the following alternative characterization of arbitrarily regularizable graphs.

\begin{theorem} \label{th:negreg5}
Let $G = (V,E)$ be a directed  graph and $E = \bigcup_i E_i$ be the partition of edges induced by the alternating path binary relation $\leftrightarrow$. Then $G$ is arbitrarily regularizable if and only if

\begin{enumerate}
\item $G$ contains neither source nor sink nodes;
\item all subgraphs $G_i = (V_i, E_i)$ induced by edge sets $E_i$ have the same number of white and black nodes.
\end{enumerate}

\end{theorem}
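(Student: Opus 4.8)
The plan is to derive Theorem~\ref{th:negreg5} directly from Theorem~\ref{th:negreg4} by reading its two conditions off the connected-component structure of $G^*$. First I would recall, from the discussion preceding the theorem, why the alternating path relation $\leftrightarrow$ mirrors connectivity in $G^*$: translating a type 1 (resp., type 2) alternating path edge by edge through the correspondence $(u,v)\mapsto(u_1,v_2)$ that defines $G^*$ turns it into a walk in $G^*$ visiting the corresponding edges, and conversely every walk in $G^*$ comes from such a path. Hence $e_1\leftrightarrow e_2$ if and only if the edges of $G^*$ corresponding to $e_1$ and $e_2$ lie in the same connected component, so the classes $E_i$ are precisely the edge sets of those connected components of $G^*$ that contain at least one edge; write $C_i$ for the component whose edge set corresponds to $E_i$.

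Next I would split the connected components of $G^*$ into single isolated vertices and components with at least one edge (there are no self-loops in $G^*$, so the latter have at least two vertices). A component that is a single white vertex $x_1$ occurs exactly when $x$ has out-degree $0$ in $G$, that is, $x$ is a sink; a component that is a single black vertex $x_2$ occurs exactly when $x$ is a source; and any such one-vertex component is unbalanced. So, by Theorem~\ref{th:negreg4}, if $G$ has a source or a sink then $G$ is not arbitrarily regularizable and condition~(1) fails as well, and it remains to treat the case in which $G$ has neither source nor sink --- equivalently, $G^*$ has no isolated vertex, equivalently, the connected components of $G^*$ are exactly the $C_i$.

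In that case I would match the balance of each $C_i$ with condition~(2). The white vertices of $C_i$ are exactly the $x_1$ that are incident in $G^*$ to an edge of $C_i$, that is, the nodes $x$ having an outgoing edge in $E_i$, that is, the white nodes of the subgraph $G_i$; symmetrically, the black vertices of $C_i$ are the black nodes of $G_i$. Thus $C_i$ is balanced if and only if $G_i$ has the same number of white and black nodes. Combining this with Theorem~\ref{th:negreg4}: $G$ is arbitrarily regularizable $\iff$ every component of $G^*$ is balanced $\iff$ $G$ has no source and no sink and every $C_i$ is balanced $\iff$ conditions~(1) and~(2) both hold.

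The only genuinely routine parts are the edge-by-edge translation between alternating paths and walks in $G^*$ and the identification of the white/black vertex counts of $C_i$ with the white/black node counts of $G_i$. The one thing that must be tracked with care is in which graph each degree is measured: \emph{source} and \emph{sink} in condition~(1) refer to in- and out-degrees in all of $G$, whereas \emph{white} and \emph{black} node of $G_i$ in condition~(2) refer to positive out- and in-degree within $G_i$; moreover the isolated vertices of $G^*$ must be accounted for separately, since the relation $\leftrightarrow$ does not see them but Theorem~\ref{th:negreg4} does. I expect this matching of definitions, rather than any new idea, to be the main obstacle.
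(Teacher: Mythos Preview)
Your proposal is correct and follows essentially the same route as the paper: both derive the result from Theorem~\ref{th:negreg4} by identifying the classes $E_i$ with the edge sets of the nontrivial connected components of $G^*$ and matching white/black counts in $G_i$ with those in the corresponding component. The only cosmetic difference is that the paper dispatches condition~(1) in the forward direction by the direct observation that a source or sink forces regularization degree $0$, whereas you route it uniformly through Theorem~\ref{th:negreg4} via the unbalanced isolated vertex of $G^*$; you are also more explicit than the paper about the alternating-path-to-walk translation and about accounting for isolated vertices of $G^*$, which the paper leaves as ``easy to realize.''
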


\begin{proof}
Suppose $G$ is arbitrarily regularizable. If $G$ contains a source or a sink node, then the regularization degree of $G$ must be $0$, hence $G$ cannot be arbitrarily regularizable. Hence we assume that $G$ contains neither source nor sink nodes. By Theorem \ref{th:negreg4} we have that all connected components of the undirected counterpart $G^*$ of $G$ are balanced (have the same number of white and black nodes). Since for each $i$ the edges of $E_i$ corresponds to the edges of some connected components $C_i$ of $G^*$, and a white node (resp., black node) in $G_i = (V_i, E_i)$ corresponds to a white node (resp., black node) in $C_i$, we have that all subgraphs $G_i = (V_i, E_i)$ induced by edge sets $E_i$ have the same number of white and black nodes.

On the other hand, if $G$ contains neither source nor sink nodes, then each connected component of $G^*$ contains at least one edge. Hence, each connected component $C_i$ in $G^*$ corresponds to some edge set $E_i$ of $G$. Since all subgraphs $G_i = (V_i, E_i)$ induced by edge sets $E_i$ have the same number of white and black nodes, and a white node (resp., black node) in $C_i$ corresponds to a white node (resp., black node) in $G_i = (V_i, E_i)$, we have that all connected components of $G^*$ are balanced, hence by Theorem \ref{th:negreg4} we have that $G$ is arbitrarily regularizable.

\end{proof}

The third, and last, characterization of arbitrarily regularizable graphs is based on the notion of matrix chainability.  First of all, we observe that, given a matrix $A$, if $P$ and $Q$ are permutation matrices then the two graphs $G_A$ and $G_{PAQ}$  are not necessarily isomorphic. However,  $G_A^*$ and $G_{PAQ}^*$
are always isomorphic. Actually, if we set $$B=\begin{bmatrix} 0 & A \\ A^T & 0\end{bmatrix},$$ then $G_A^*$ is equal to $G_B$. Moreover
$$\begin{bmatrix} 0 & PAQ \\ (PAQ)^T & 0\end{bmatrix}=
       \begin{bmatrix} P & 0 \\ 0 & Q^T\end{bmatrix}\begin{bmatrix} 0 & A \\ A^T & 0\end{bmatrix}\begin{bmatrix} P & 0 \\ 0 & Q^T\end{bmatrix}^T,$$
so that $G_B$ and $G_{PAQ}^*$ are isomorphic.

Now, we recall the definition of chainable matrix \cite{HM75,SK69}.

\begin{definition}
An $m\times n$ matrix $A$ is chainable if
\begin{enumerate}
\item $A$ has no rows or columns composed entirely of zeros, and
\item for each pair of non-zero entries $a_{p,q}$ and $a_{r,s}$ of $A$ there is a sequence of non-zero
entries $a_{i_k,j_k}$ for $k=1,\ldots,t$ such that $(p,q)=(i_1,j_1)$, $(r,s)=(i_t,j_t)$  and  for $k=1,\ldots,t-1$ either $i_k=i_{k+1}$ or $j_k=j_{k+1}$.
\end{enumerate}
\end{definition}
\noindent
As noted in \cite{HM75,SK69},  the property of being chainable can be described by saying that one may move from a nonzero entry of $A$ to another by a sequence of rook moves on the nonzero entries of the matrix.
Notice that if $A$ is the adjacency matrix of a graph $G_A=(V,E)$ then $A$ is chainable if and only if for all $e_1, e_2\in E$ it holds that $e_1\leftrightarrow e_2$. Actually an alternating path between the edges of $G_A$ corresponds to a sequence of rook moves between the nonzero entries of $A$.
It is interesting to observe that if $A$ is chainable then $A^T$ is chainable. In addition, if $A$ is chainable and $P$ and $Q$ are permutation matrices then $PAQ$ is chainable, since if two entries of the matrix belong to the same row or column then this property is not lost after a permutation of rows and columns \cite{HM75}.

The following theorem, borrowed from \cite{HM75}, suggests a sort of matrix canonical form
that involves chainability.

\begin{lemma}\label{lem:diagblock}
If $A$ is $m\times n$ and has no rows or columns of zeros, then there are permutations matrices $P$ and $Q$ so that
$$PAQ=\begin{bmatrix}A_1 & 0      & \cdots & 0\\
                      0  & A_1    & \cdots & 0\\
                  \vdots & \vdots &        & \vdots\\
                      0  & 0      & \cdots & A_s                 \end{bmatrix}$$
where the diagonal blocks $A_k$, for $k=1,2,\ldots,s$, are chainable.
\end{lemma}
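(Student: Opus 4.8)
The plan is to read off the block structure from a bipartite graph attached to $A$. To $A$ I would associate its \emph{incidence graph} $\Gamma$, whose vertex set is the disjoint union of the $m$ rows and the $n$ columns of $A$, with an edge joining row $i$ to column $j$ exactly when $a_{i,j}\neq 0$. The hypothesis that $A$ has no zero row and no zero column says precisely that $\Gamma$ has no isolated vertex. (This is the same bipartite object, in matrix language, as the graph $G_A^*$ appearing earlier; for rectangular $A$ it is cleanest to argue directly with rows and columns.)

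First I would decompose $\Gamma$ into its connected components $\Gamma_1,\dots,\Gamma_s$. Write $R_k$ for the set of rows and $C_k$ for the set of columns lying in $\Gamma_k$. Since no vertex is isolated, every row and every column lies in exactly one component, so $\{R_1,\dots,R_s\}$ partitions the rows and $\{C_1,\dots,C_s\}$ partitions the columns. Let $P$ be the row permutation that lists the rows of $R_1$ first, then those of $R_2$, and so on, and let $Q$ be the analogous column permutation built from $C_1,\dots,C_s$. In $PAQ$, a nonzero entry whose row lies in $R_k$ and whose column lies in $C_\ell$ would be an edge of $\Gamma$ joining $\Gamma_k$ to $\Gamma_\ell$, which is impossible unless $k=\ell$; hence $PAQ$ is block diagonal with diagonal blocks $A_1,\dots,A_s$, the block $A_k$ occupying rows $R_k$ and columns $C_k$.

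It remains to verify that each $A_k$ is chainable. Its incidence graph is exactly $\Gamma_k$, which is connected, and $\Gamma_k$ inherits from $\Gamma$ the absence of isolated vertices (a row of $R_k$ has a neighbour in $\Gamma$, which must be a column of $C_k$, and symmetrically for columns), so $A_k$ has no zero row or column — condition (1) of chainability. For condition (2), I would fix two nonzero entries $a_{p,q}$ and $a_{r,s}$ of $A_k$, viewed as two edges of $\Gamma_k$; since $\Gamma_k$ is connected and has at least one edge, its line graph is connected, so those two edges are joined by a sequence of edges of $\Gamma_k$ in which consecutive edges share an endpoint. A shared row is a vertical rook move between the corresponding nonzero entries and a shared column is a horizontal rook move, so reading the entries off this edge sequence yields the chain $a_{i_k,j_k}$ required by the definition. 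Hence every $A_k$ is chainable, which proves Lemma~\ref{lem:diagblock}.

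I do not expect a real obstacle here; the only points that need a little care are the simultaneous partition of rows and columns by the components of $\Gamma$ together with the observation that no nonzero entry crosses two components — this is exactly where the absence of zero rows and columns is used — and the elementary fact that the line graph of a connected graph with at least one edge is connected, which is what converts graph connectivity into a chain of rook moves. This is, in essence, the argument of \cite{HM75}.
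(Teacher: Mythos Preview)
Your argument is correct. Note, however, that the paper does not actually give a proof of this lemma: it is stated as ``borrowed from \cite{HM75}'' and left without demonstration, so there is no in-paper proof to compare against. What you have written is exactly the standard bipartite-graph/connected-component argument one expects from \cite{HM75} --- associate to $A$ the bipartite graph on rows and columns, group by connected components to get the block-diagonal form, and translate connectivity of each component into the rook-move chain --- and you correctly flag the two places where care is needed (the simultaneous row/column partition, and passing from graph connectivity to edge-chains via the line graph).
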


Finally, we are ready to prove the following characterization of arbitrarily regularizable graphs.

\begin{theorem}\label{th:negregchar} Let $A$ be a square $n\times n$  matrix,   and let
$G_A$ be the graph whose adjacency matrix is $A$. Then
\begin{enumerate}
\item
$G_A$ is arbitrarily regularizable if and
only if there exist two permutation matrices $P$ and $Q$ such that $PAQ$ is a block diagonal matrix with square and chainable diagonal blocks;
\item $G_A$ is not arbitrarily regularizable if and
only if there exist two permutation matrices $P$ and $Q$ such that $PAQ$ is a block diagonal matrix with chainable diagonal blocks some of which are not square.
\end{enumerate}
\end{theorem}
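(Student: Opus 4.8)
The plan is to reduce everything to the two tools already assembled: Lemma~\ref{lem:diagblock} (the block-diagonal canonical form with chainable blocks) and Theorem~\ref{th:negreg4} together with the correspondence between chainable blocks and the equivalence classes $E_i$ of the alternating-path relation $\leftrightarrow$. The first observation is that the two items are logical negations of one another once we know that \emph{some} pair $P,Q$ puts $A$ in block-diagonal form with chainable blocks; so it suffices to prove item~(1), and item~(2) follows provided we also check that the block structure (in particular, which blocks are square and which are not) does not depend on the choice of $P,Q$ realizing the canonical form. I would state that invariance explicitly: the blocks $A_k$ correspond bijectively to the connected components of $G_A^*$ (equivalently, to the classes $E_i$), an $m_k \times n_k$ block $A_k$ corresponds to a component with $m_k$ white nodes and $n_k$ black nodes, and a component is balanced iff its block is square. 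This is exactly the content recorded just before the theorem, combined with the remark that $G_A^*$ is invariant under $A \mapsto PAQ$.

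For item~(1), first dispose of the degenerate case: if $A$ has a zero row or zero column then $G_A$ has a sink or a source, so by (the source/sink argument in the proof of) Theorem~\ref{th:negreg5} it is not arbitrarily regularizable; and on the matrix side, a zero row or column cannot be placed inside any square chainable block, nor inside a block-diagonal form with \emph{all} blocks chainable (chainable blocks have no zero rows or columns), so neither side of the ``iff'' holds and we are consistent. So assume $A$ has no zero row or column. Apply Lemma~\ref{lem:diagblock} to get $P,Q$ with $PAQ = \mathrm{diag}(A_1,\dots,A_s)$, each $A_k$ chainable. Now use the dictionary: $A_k$ is $m_k \times n_k$, it is the adjacency-type block of a connected component $C_k$ of $G_A^* = G_{PAQ}^*$, and $m_k, n_k$ are the numbers of white and black nodes of $C_k$. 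By Theorem~\ref{th:negreg4}, $G_A$ is arbitrarily regularizable iff every $C_k$ is balanced, i.e. iff $m_k = n_k$ for all $k$, i.e. iff every $A_k$ is square. This gives both directions of~(1): if such $P,Q$ with all blocks square exist, the components are balanced and $G_A$ is arbitrarily regularizable; conversely if $G_A$ is arbitrarily regularizable then $A$ has no zero line, Lemma~\ref{lem:diagblock} applies, and the resulting blocks are all square.

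For item~(2) I would simply negate~(1) using the invariance noted above. If $G_A$ is not arbitrarily regularizable, then either $A$ has a zero row/column — in which case Lemma~\ref{lem:diagblock} does not literally apply, so I would first permute that zero line out and argue directly, or (cleaner) note that a zero row is itself a degenerate ``non-square'' $0\times 1$ block and phrase Lemma~\ref{lem:diagblock} to allow trivial blocks; in the paper's setup the simplest route is: adjoin the zero rows/columns as their own $1\times 0$ or $0\times 1$ blocks, which are chainable-by-convention and not square — or else $A$ has no zero line, Lemma~\ref{lem:diagblock} gives the canonical form, and by item~(1)'s analysis not all blocks can be square. Conversely, if some $P,Q$ give a block-diagonal chainable form with a non-square block, that block is a connected component of $G_A^*$ with unequal numbers of white and black nodes, hence an unbalanced bipartite component, so by Theorem~\ref{th:negreg4} $G_A$ is not arbitrarily regularizable.

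The main obstacle is bookkeeping rather than mathematics: making the zero-row/zero-column case fit cleanly into the block-diagonal framework of Lemma~\ref{lem:diagblock} (which as stated assumes no zero lines), and pinning down precisely that ``non-square block'' on the matrix side matches ``unbalanced component'' on the graph side — i.e. that the $m_k\times n_k$ shape of $A_k$ equals the (white, black) count of the corresponding component of $G_A^*$. Everything else is a direct application of Theorem~\ref{th:negreg4} through the identification $G_{PAQ}^* \cong G_A^*$ already established in the text.
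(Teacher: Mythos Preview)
Your proposal is correct and follows essentially the same route as the paper: apply Lemma~\ref{lem:diagblock}, identify each chainable diagonal block with a connected component of $G_A^* \cong G_{PAQ}^*$ (using that $G_A^*$ is connected iff $A$ is chainable), match the block's dimensions to the white/black node counts of that component, and invoke Theorem~\ref{th:negreg4} (the paper phrases this last step via Theorem~\ref{th:negreg3}, which amounts to the same thing). For item~(2) the paper, like you, derives it from item~(1) via the invariance observation that any two chainable block-diagonal forms of $A$ must have the same block shapes because both encode $G_A^*$; you are in fact more scrupulous than the paper about the zero-row/zero-column degenerate case, which the paper's proof does not explicitly address.
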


\begin{proof}
We start by proving item (1).
Let us assume first that there exist two permutation matrices $P$ and $Q$ such that $PAQ$ is block diagonal with square and chainable diagonal blocks.
Theorem 1.2 in \cite{HM75} states that the graph $G_A^*$ is connected if and only if $A$ is chainable. Hence, each diagonal block corresponds to a connected component of  $G_{PAQ}^*$.
Since the diagonal blocks are square the connected components of $G_{PAQ}^*$ are balanced and thus
arbitrarily regularizable. This implies that $G_{PAQ}^*$ is arbitrarily regularizable.
Hence $G_{A}^*$ is  arbitrarily regularizable, being isomorphic to $G_{PAQ}^*$. We obtain the thesis by means of Theorem \ref{th:negreg3}.

Now let us assume that $G_A$ is arbitrarily regularizable. This implies that $A$ cannot contain rows or columns of zeros, so that,
by means of Lemma \ref{lem:diagblock} we obtain that there exist two permutations $P$ and $Q$ such that $PAQ$ is block diagonal with chainable diagonal blocks. Since each of the diagonal blocks corresponds to a connected component of $G_{PAQ}^*$ the presence of non-square blocks would imply the presence of unbalanced connected components in $G_{PAQ}^*$. Hence $G_{PAQ}^*$ would be not arbitrarily regularizable. But this is impossible since $G_{PAQ}^*$ is isomorphic to $G_A^*$. Hence all the diagonal blocks must be square.

Now to prove item (2) we note that it is impossible to find two permutations $P$ and $Q$ such that $PAQ$ is block diagonal with square and chainable diagonal blocks and at the same time two permutations $R$ and $S$ such that $RAS$ is block diagonal with chainable diagonal blocks some of which are not square. Indeed, the two graphs $G^*_{PAQ}$ and $G^*_{RAS}$ would be isomorphic, since they are both isomorphic to $G^*_A$.
\end{proof}

\begin{example}
Let us consider the adjacency matrix $A$ of the top right graph in Figure \ref{fig:undirected}.
$$
A=\begin{bmatrix}     0 & 1 & 1 & 1 & 0 & 0\\
                      1 & 0 & 0 & 0 & 0 & 0\\
                      1 & 0 & 0 & 0 & 0 & 0\\
                      1 & 0 & 0 & 0 & 1 & 1\\
                      0 & 0 & 0 & 1 & 0 & 0\\
                      0 & 0 & 0 & 1 & 0 & 0
\end{bmatrix}.$$
The matrix is not chainable. Observe that the equivalence classes of the relation $\leftrightarrow$ are $E_1=\{(1,2)$, $(1,3)$, $(1,4)$, $(5,4)$, $(6,4)\}$
and $E_2=\{(2,1)$, $(3,1)$, $(4,1)$, $(4,5)$, $(4,6)\}$. By using two permutation $P$ and $Q$ to move rows $1$, $5$, $6$ (the white nodes of $E_1$) on the top of the matrix
 and rows $2$, $3$, $4$ (the white nodes of $E_2$) on the bottom and to move columns $2$, $3$, $4$ (the black nodes of $E_1$)
 on the left and columns $1$, $5$, $6$ (the black nodes of $E_2$) on the right
we obtain
$$PAQ= \begin{bmatrix} 1 & 1 & 1 & 0 & 0 & 0\\
                       0 & 0 & 1 & 0 & 0 & 0\\
                       0 & 0 & 1 & 0 & 0 & 0\\
                       0 & 0 & 0 & 1 & 0 & 0\\
                       0 & 0 & 0 & 1 & 0 & 0\\
                       0 & 0 & 0 & 1 & 1 & 1
\end{bmatrix}.$$
Observe that $PAQ$ is block diagonal with square and chainable diagonal blocks. Hence $G_A$ is arbitrarily regularizable.

As a second example, the matrix
$$
A_1=\begin{bmatrix}   0 & 1 & 0 & 0\\
                    0 & 0 & 1 & 0\\
                    1 & 0 & 0 & 1\\
                    0 & 1 & 1 & 0
\end{bmatrix}
$$
is the adjacency matrix  of the top left graph in Figure \ref{fig:directed}. The matrix is not chainable.
The equivalence $\leftrightarrow $  has the two equivalence classes  $E_1=\{(1,2)$, $(4,2)$, $(4,3)$, $(2,3)\}$ and $E_2=\{(3,1)$, $(3,4)\}$. By permuting rows and columns of $A_1$ according to the black and white nodes that appear in the two equivalence classes of the relation
$\leftrightarrow$ we obtain
$$
\begin{bmatrix}     1 & 0 & 0 & 0\\
                    0 & 1 & 0 & 0\\
                    1 & 1 & 0 & 0\\
                    0 & 0 & 1 & 1
\end{bmatrix}.
$$
The presence of non-square chainable diagonal blocks implies that $G_{A_1}$ cannot be arbitrarily regularizable.
\end{example}

\section{The regularization hierarchy is strict} \label{inclusion}

\begin{figure}[t]
\begin{center}
\includegraphics[scale=0.35, angle=0]{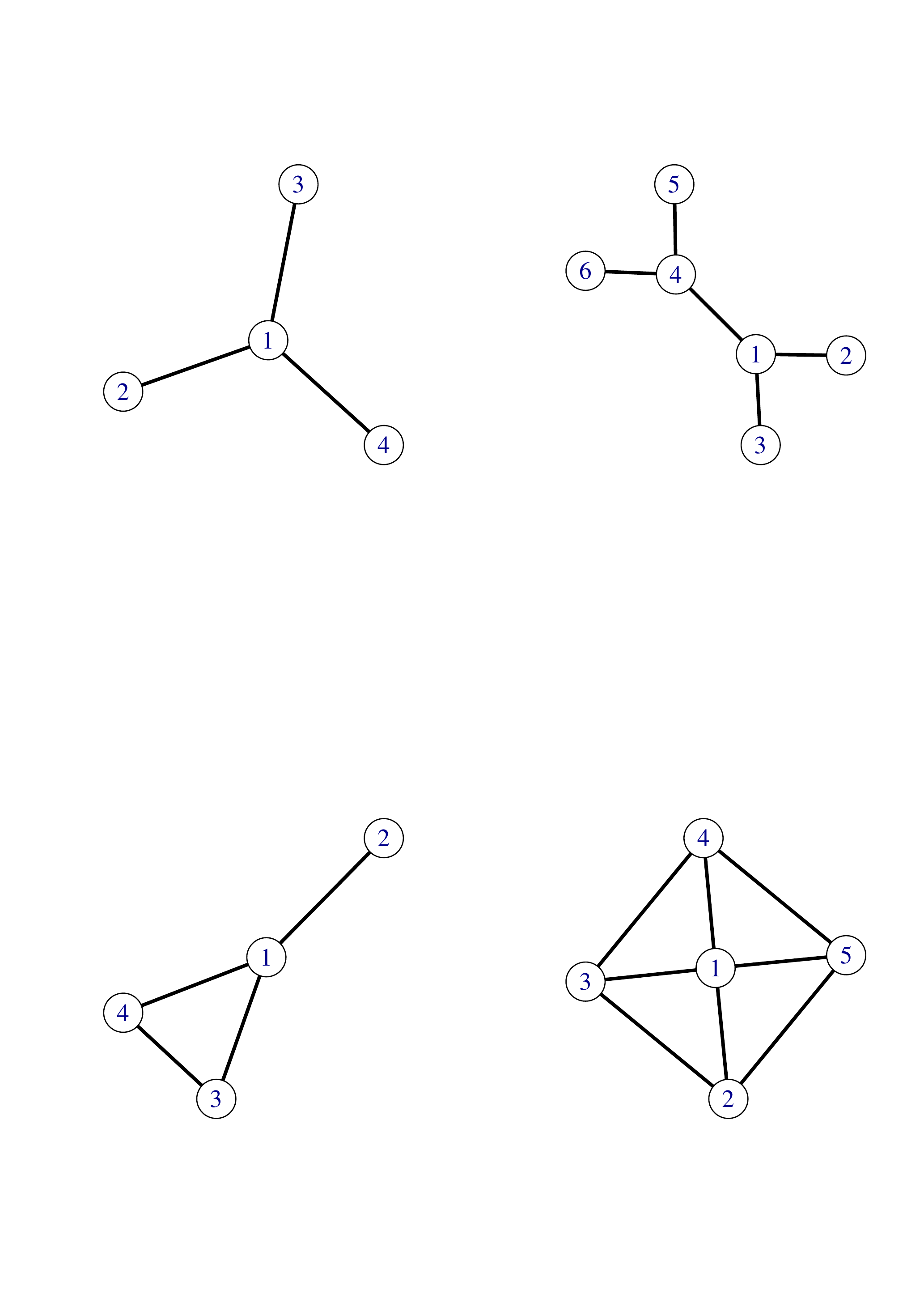}
\end{center}
\caption{A hierarchy of undirected graphs. Top-left: not arbitrarily regularizable; Top-right: arbitrarily regularizable not nonnegatively regularizable; Bottom-left: nonnegatively regularizable not positively regularizable; Bottom-right: positively regularizable not regular. }
\label{fig:undirected}
\end{figure}

\begin{figure}[t]
\begin{center}
\includegraphics[scale=0.35, angle=0]{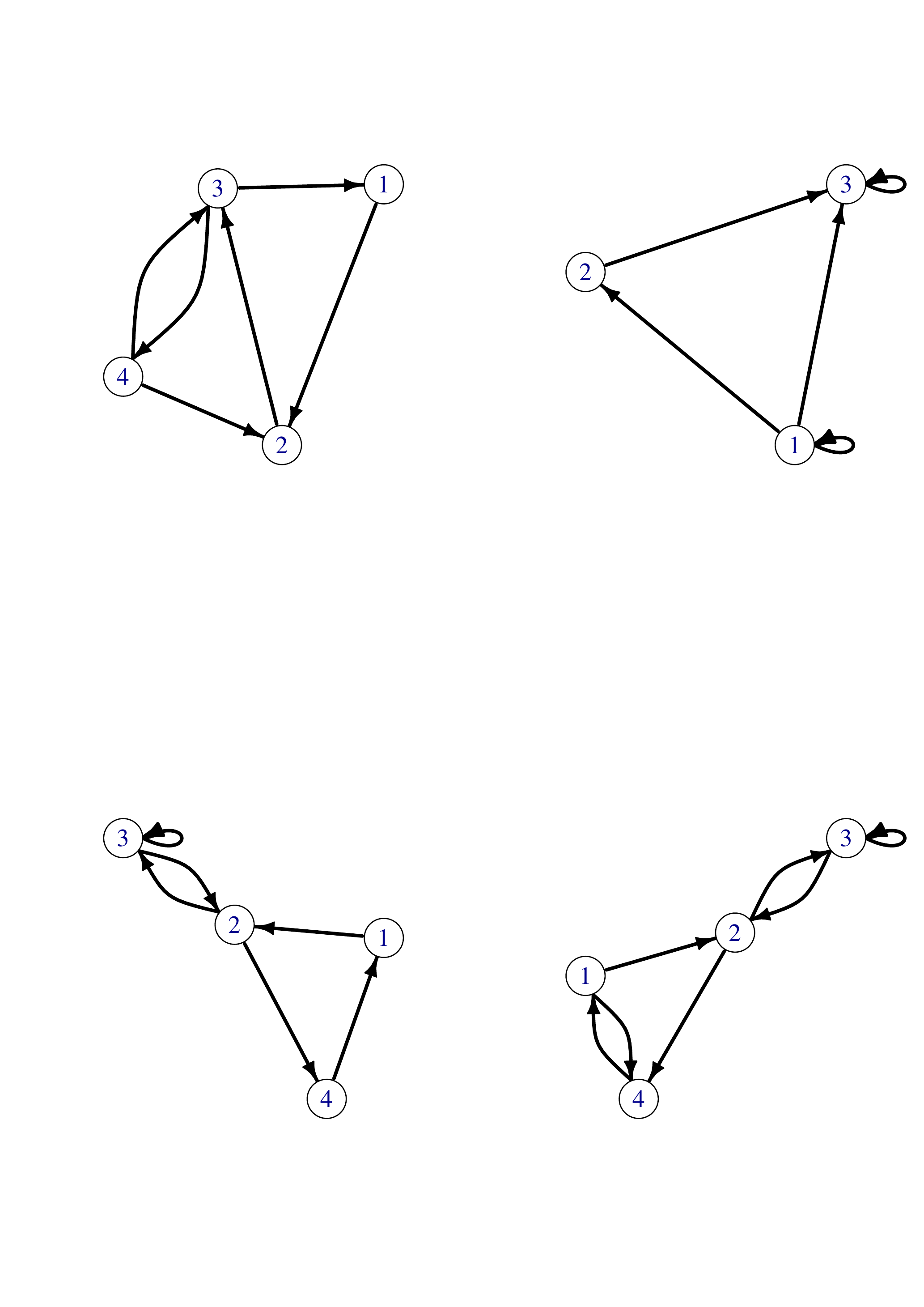}
\end{center}
\caption{A hierarchy of directed graphs. Top-left: not arbitrarily regularizable; Top-right: arbitrarily regularizable not nonnegatively regularizable; Bottom-left: nonnegatively regularizable not positively regularizable; Bottom-right: positively regularizable not regular. }
\label{fig:directed}
\end{figure}

We show here that the regularization hierarchy is strict, meaning that each class in properly contained in the previous one, for both undirected and directed graphs. We first address the undirected case. Consider Figure \ref{fig:undirected}. The top-left graph is not arbitrarily regularizable, since it is bipartite and unbalanced. In particular, since each leaf of the star must have the same degree, each edge must be labelled with the same weight $\alpha$, but this forced the degree of the center to be $3\alpha > \alpha$, unless $\alpha=0$.

A graph that is arbitrarily regularizable but not nonnegatively regularizable is the top-right one. Since the graph is bipartite and balanced, it is arbitrarily regularizable: if we label each external edge with $\alpha > 0$ and the central bridge with $-\alpha$, then each edge has degree $\alpha$. The graph is not nonnegatively regularizable since it contains no spanning cycle forest, hence it does not have support.

A graph that is nonnegatively regularizable but not positively regularizable is the bottom-left one. The graph is nonnegatively regularizable since edges $(1,2)$ and $(3,4)$ form a spanning cycle forest. The solution that labels the edges $(1,2)$ and $(3,4)$ with $\alpha > 0$ and the other edges with $0$ is a nonnegative regularizability solution with regularization degree $\alpha$. The graph is not positively regularizable since edges $(1,3)$ and $(1,4)$ do not belong to any spanning cycle forest.

Finally, a graph that is positively regularizable but not regular is the bottom-right one. The graph is positively regularizable since each edge belongs to the spanning cycle forest formed by the tringle that contains the edge plus the opposite edge. If $\alpha > 0$ and we label the outer edges with $3 \alpha$ and the inner edges with $2 \alpha$ we have a positive regularizability solution with regularization degree $8 \alpha$. Clearly, the graph is not regular.

We now address the directed case. Consider Figure \ref{fig:directed}. The top-left graph is not arbitrarily regularizable. The equivalence relation $\leftrightarrow $  has the two equivalence classes  $E_1=\{(1,2)$, $(4,2)$, $(4,3)$, $(2,3)\}$ and $E_2=\{(3,1)$, $(3,4)\}$. Class $E_1$ is unbalanced since it contains three white nodes (1, 2 and 4) and two black nodes (2 and 3). Also, class $E_2$ is unbalanced since it contains one white node (3) and two black nodes (1 and 4).

A graph that is arbitrarily regularizable but not nonnegatively regularizable is the top-right one. The graph is chainable and the equivalence relation $\leftrightarrow$ has only one class containing all edges. All nodes are both white and black and hence the class is balanced. If we weight each edge with 1 excluding edge $(1,3)$ that we weighted with -1, then all nodes have in and out degrees equal to 1. The graph is not nonnegatively regularizable since there is no spanning cycle forest: the only cycles are indeed the two loops, which do not cover node 2.

A graph that is nonnegatively regularizable but not positively regularizable is the bottom-left one. Indeed, the loop plus the 3-cycle make a spanning cycle forest. However, the remaining edges (those on the 2-cycle) are not contained in any spanning cycle forest.

Finally, a graph that is positively regularizable but not regular is the bottom-right one. The loop plus the 3-cycle and the two 2-cycles form two distinct spanning cycle forests that cover all edges. It is easy to see that the graph is not regular.

\section{Computational complexity} \label{complexity}

In this section we make some observations on the computational complexity of positioning a graph in the hierarchy we have developed. The reduction $G^*$ of a directed graph $G$ turns out to be useful to check whether a graph is nonnegatively as well as positively regularizable. We remind that a \textit{matching} $M$ is a subset of edges with the property that different edges of $M$ cannot have a common endpoint. A matching $M$ is called \text{perfect} if every node of the graph is the endpoint of (exactly) one edge of $M$. Notice that a bipartite graph has a spanning cycle forest if and only if it has a perfect matching. Moreover, every edge of a bipartite graph is included in a spanning cycle forest if and only if every edge of the graph is included in a perfect matching. Using Theorem \ref{th:regularizability}, we hence have the following.

\begin{theorem} \label{th:reg}
Let $G$ be a directed  graph and $G^*$ its undirected counterpart. Then:

\begin{enumerate}
\item $G$ is nonnegatively regularizable if and only if $G^*$ has a perfect matching;
\item $G$ is positively regularizable if and only if every edge of $G^*$ is included in a perfect matching.
\end{enumerate}

\end{theorem}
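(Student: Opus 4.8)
The plan is to reduce everything to properties of the undirected counterpart $G^*$, exactly as the chain of earlier results has been doing, and then invoke the classical matching-theory facts that the statement itself already sets up. The key structural observations are: (i) a bipartite graph has a spanning cycle forest if and only if it has a perfect matching — because in a bipartite graph every cycle has even length, so orienting a perfect matching gives a union of $2$-cycles, and conversely every spanning cycle forest of a bipartite graph decomposes each even cycle into a perfect matching; (ii) every edge of a bipartite graph lies in some spanning cycle forest if and only if every edge lies in some perfect matching, for the same reason applied edge by edge. Both of these are recalled in the paragraph preceding the theorem, so I would state them and use them directly.

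For item (1): by Theorem~\ref{th:negreg3}, $G$ is nonnegatively regularizable if and only if $G^*$ is nonnegatively regularizable. By Theorem~\ref{th:quasi-regularizability} (and the remark following it that an undirected graph is nonnegatively regularizable iff it contains a spanning cycle forest), $G^*$ is nonnegatively regularizable if and only if $G^*$ contains a spanning cycle forest. Since $G^*$ is bipartite, observation (i) says this is equivalent to $G^*$ having a perfect matching. Chaining the three equivalences gives the claim.

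For item (2): again by Theorem~\ref{th:negreg3}, $G$ is positively regularizable iff $G^*$ is. By Theorem~\ref{th:regularizability} together with its discussion (an undirected graph is positively regularizable iff each edge lies in some undirected spanning cycle forest), $G^*$ is positively regularizable iff every edge of $G^*$ is contained in some spanning cycle forest. Since $G^*$ is bipartite, observation (ii) converts this to: every edge of $G^*$ lies in some perfect matching. Chaining gives the result.

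The proof is almost entirely bookkeeping: the only point requiring a small argument is observations (i) and (ii), i.e.\ the correspondence between spanning cycle forests and perfect matchings in the bipartite graph $G^*$ — and even that is elementary, since bipartiteness rules out odd cycles and forces every component of a spanning cycle forest (edges and even cycles) to split canonically into matching edges. I expect the main (still minor) obstacle to be stating (i)–(ii) cleanly enough that the reader sees why a single isolated edge of the forest is itself a perfect matching on its two endpoints, so that a spanning cycle forest of a bipartite graph really does yield a global perfect matching; once that is granted, the rest is a direct substitution of Theorems~\ref{th:negreg3}, \ref{th:regularizability}, and \ref{th:quasi-regularizability}.
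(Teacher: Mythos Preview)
Your proposal is correct and follows essentially the same route as the paper: the paragraph immediately preceding the theorem already records observations (i) and (ii) about bipartite spanning cycle forests versus perfect matchings, and the paper then simply cites Theorem~\ref{th:regularizability} (implicitly also Theorem~\ref{th:quasi-regularizability}) to conclude. Your write-up is a faithful unpacking of that sketch, with the additional virtue of making the passage through Theorem~\ref{th:negreg3} explicit rather than leaving it to the ambient discussion of~$G^*$.
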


The easiest problem is to decide whether a graph is arbitrarily regularizable. For an undirected graph $G$ with $n$ nodes and $m$ edges, it involves finding the connected components of $G$ and determining if they are bipartite, and in case, if they are balanced. For a directed graph $G$, it involves finding the connected components of the undirected graph $G^*$ (which are bipartite graphs) and determining if they are balanced. All these operations can be performed in linear time $O(n+m)$ in the size of the graph. The problem can be formulated as the following linear programming feasibility problem (any feasible solution is a regularization solution):
$$
\begin{array}{l}
B w = re \\
r \geq 1
\end{array}
$$
where $B$ is the incidence matrix of the graph as defined at the beginning of Section \ref{hierarchy}.
Also regular graphs can be checked in linear time $O(n+m)$ by computing the indegrees and outdegrees for every node in directed graphs, or simply the degrees in the undirected case.
The complexity for nonnegatively and positively regularizability is higher, but still polynomial.
To determine if an undirected graph $G$ is nonnegatively regularizable, we have to find a spanning cycle forest. Using the construction adopted in the proof of Theorem 6.1.4 in \cite{LP86}, this boils down to solve a perfect matching problem on a bipartite graph of the same asymptotic size of $G$ (precisely, with a double number of nodes and edges). This costs $O(\sqrt{n} m)$ using Hopcroft-Karp algorithm for maximum cardinality matching in bipartite graphs, that is $O(n^{1.5})$ on sparse graphs. The directed case is covered by Theorem \ref{th:reg} with the same complexity. Moreover, the problem can be encoded as the following linear programming feasibility problem:
$$
\begin{array}{l}
B w = re \\
w \geq 0 \\
r \geq 1
\end{array}
$$

To decide if an undirected graph is positively regularizable, we have to find a spanning cycle forest for every edge of $G$. Using again Theorem 6.1.4 in \cite{LP86}, this amounts to solve at most $m$ perfect matching problems on bipartite graphs of the same asymptotic size of $G$, which costs $O(\sqrt{n} m^2)$, that is $O(n^{2.5})$ on sparse graphs.
Again, the directed case is addressed by Theorem \ref{th:reg} with the same complexity. Finally, the problem is equivalent to the following linear programming feasibility problem:
$$
\begin{array}{l}
B w = re \\
w \geq 1
\end{array}
$$

\section{Related literature} \label{related}

Regularizable graphs were introduced and studied by Berge \cite{B78, B81}, see also Chapter 6 in \cite{LP86}. We summarize in the following the main results related to our work. A connected undirected graph $G$ is nonnegatively regularizable (quasi-regularizable) if and only if for every independent set $S$ of nodes of $G$ it holds that $|S| \leq |N(S)|$, where $N(S)$ is the set of neighbors of $S$.
A connected undirected graph $G$ is positively regularizable (regularizable) if and only if $G$ is either elementary bipartite or 2-bicritical. A bipartite graph is elementary if and only if it is connected and each edge is included in a perfect matching. A graph $G$ is 2-bicritical if and only if for every nonempty independent set $S$ of nodes of $G$ it holds that $|S| < |N(S)|$.

In \cite{BFR15} the vulnerability of an undirected graph $G$ is defined as $$\bar{\nu}_G=\max_{S}\left( |S|-|N(S)| \right),$$ where $S$ is any independent nonempty set of nodes of $G$. It holds that $\bar{\nu}_G\leq 0$ if and only if $G$ is nonnegatively regularizable. In addition $\bar{\nu}_G< 0$ if and only if $G$ is 2-bicritical. Hence, nonnegatively regularizable graphs, and in particular, positively regularizable ones tend to have low vulnerability. On the other hand, this does not hold for arbitrarily regularizable graphs: as an example, consider the square $n\times n$ matrix

$$
A=\begin{bmatrix}0& 1 & \cdots & 1 & 1\\
             1& 0 & \cdots &0  & 1\\
        \vdots& \vdots &0  &\vdots & \vdots\\
          1& 0 & \cdots &0  & 1\\
1& 1 & \cdots & 1 & 0
\end{bmatrix}.
$$

Matrix $A$ is chainable and hence $G_A$ is arbitrarily regularizable. It is not difficult to show that $\bar{\nu}_{G_A}=n-4$, hence the vulnerability of $G_A$ can be arbitrarily high as the graph grows.

The problem or regularizability could be seen as a member of a wide family of problems concerning the existence of matrices
with prescribed conditions on the entries and on sums of certain subsets of the entries, typically row and columns \cite{HHS97,JS00}.
Additional conditions on the matrices can be imposed, such as, for example, symmetry or skew-symmetry \cite{EJKS02}.
Actually, Brualdi in \cite{BRU68} gives necessary and sufficient conditions for the existence of a nonnegative rectangular matrix with a given zero pattern and prescribed row and column sums. In \cite{HHS97} these results are generalized in various directions and in particular by considering the prescription that the entries of the matrix belong to finite or (half)infinite intervals (thus encompassing the case where some entries are prescribed to be zero or positive or nonnegative or are unrestricted).
Our approach to regularizability is more specific and graph-theoretic.

\section{Conclusion} \label{conclusion}

In many real-world social systems, links between two nodes can be represented as signed networks with positive and negative connections. With roots in social psychology, signed network analysis has attracted much attention from multiple disciplines such as physics and computer science, and has evolved both from graph-theoretic and data science perspectives \cite{TCAL16}.

In this work, we continued this endeavor by extending the notion of regularization to signed networks. We found different characterizations of the class of arbitrary regularizable graphs in terms of the topology of the graph and of the pattern of the corresponding adjacency matrix. Furthermore, we investigated the computational complexity of the problem, which can be modelled in linear programming.

\bibliographystyle{elsarticle-num}

\begin{thebibliography}{10}
\expandafter\ifx\csname url\endcsname\relax
  \def\url#1{\texttt{#1}}\fi
\expandafter\ifx\csname urlprefix\endcsname\relax\def\urlprefix{URL }\fi
\expandafter\ifx\csname href\endcsname\relax
  \def\href#1#2{#2} \def\path#1{#1}\fi

\bibitem{BF16}
E.~Bozzo, M.~Franceschet, A theory on power in networks, Communications of the
  ACM 59~(11) (2016) 75--83.

\bibitem{EK10}
D.~Easley, J.~Kleinberg, Networks, Crowds, and Markets: Reasoning About a
  Highly Connected World, Cambridge University Press, New York, NY, USA, 2010.

\bibitem{J08}
M.~O. Jackson, Social and Economic Networks, Princeton University Press, 2008.

\bibitem{E62}
R.~M. Emerson, Power-dependance relations, American Sociological Review 27~(1)
  (1962) 31--41.

\bibitem{SK67}
R.~Sinkhorn, P.~Knopp, Concerning nonnegative matrices and doubly stochastic
  matrices, Pacific Journal of Mathematics 21 (1967) 343--348.

\bibitem{B78}
C.~Berge, Regularisable graphs {I}, {II}, Discrete Mathematics 23~(2) (1978)
  85--89, 91--95.

\bibitem{B81}
C.~Berge, Some common properties for regularizable graphs, edge-critical graphs
  and {B}-graphs, in: N.~Saito, T.~Nishizeki (Eds.), Graph Theory and
  Algorithms, Vol. 108 of Lecture Notes in Computer Science, Springer, Berlin,
  1981, pp. 108--123.

\bibitem{KR13}
P.~A. Knight, D.~Ruiz, A fast algorithm for matrix balancing, IMA Journal of
  Numerical Analysis 33~(3) (2013) 1029--1047.

\bibitem{B70}
M.~Bacharach, Biproportional Matrices and Input-Output Change, Cambridge
  University Press, Cambridge, 1970.

\bibitem{LS81}
B.~Lamond, N.~F. Stewart, Bregman’s balancing method, Transportation Research
  Part B: Methodological 15 (1981) 239--248.

\bibitem{B08}
M.~Balinski, Fair majority voting (or how to eliminate gerrymandering), The
  American Mathematical Monthly 115 (2008) 97--113.

\bibitem{DMGMW09}
M.~Daszykowski, E.~Mosleth~Frgestad, H.~Grove, H.~Martens, B.~Walczak, Matching
  2d gel electrophoresis images with matlab image processing toolbox,
  Chemometrics and Intelligent Laboratory Systems 96 (2009) 188--195.

\bibitem{K08}
P.~A. Knight, The sinkhorn–knopp algorithm: convergence and applications.,
  SIMAX 30 (2008) 261--275.

\bibitem{TCAL16}
J.~Tang, Y.~Chang, C.~Aggarwal, H.~Liu, A survey of signed network mining in
  social media, ACM Computing Surveys 49~(3) (2016) 42:1--42:37.

\bibitem{KLB09}
J.~Kunegis, A.~Lommatzsch, C.~Bauckhage, The {Slashdot Zoo}: Mining a social
  network with negative edges, in: Proceedings of the 18th International
  Conference on World Wide Web, 2009, pp. 741--750.

\bibitem{HM75}
D.~J. Hartfiel, C.~J. Maxson, The chainable matrix, a special combinatorial
  matrix, Discrete Mathematics 12 (1975) 245--256.

\bibitem{PM65}
H.~Perfect, L.~Mirsky, The distribution of positive elements in
  doubly-stochastic matrices, Journal of the London Mathematical Society 40
  (1965) 689--698.

\bibitem{HJ13}
R.~A. Horn, C.~R. Johnson, Matrix analysis, Cambridge University Press,
  Cambridge, 1985.

\bibitem{LP86}
L.~Lov\'asz, M.~Plummer, Matching Theory, Vol.~29 of Annals of discrete
  mathematics, North Holland, Amsterdam, 1986.

\bibitem{BA14}
R.~B. Bapat, Graphs and matrices, 2nd Edition, Universitext, Springer, London;
  Hindustan Book Agency, New Delhi, 2014.

\bibitem{SK69}
R.~Sinkhorn, P.~Knopp, Problems involving diagonal products in nonnegative
  matrices, Transactions of the American Mathematical Society 136 (1969)
  67--75.

\bibitem{BFR15}
E.~Bozzo, M.~Franceschet, F.~Rinaldi, Vulnerability and power on networks,
  Nework Science 3~(2) (2015) 196--226.

\bibitem{HHS97}
D.~Hershkowitz, A.~J. Hoffman, H.~Schneider, On the existence of sequences and
  matrices with prescribed partial sums of elements, Linear Algebra Appl. 265
  (1997) 71--92.

\bibitem{JS00}
C.~R. Johnson, D.~P. Stanford, Patterns that allow given row and column sums,
  Linear Algebra Appl. 311~(1-3) (2000) 97--105.

\bibitem{EJKS02}
E.~E. Eischen, C.~R. Johnson, K.~Lange, D.~P. Stanford, Patterns, linesums, and
  symmetry, Linear Algebra Appl. 357 (2002) 273--289.

\bibitem{BRU68}
R.~A. Brualdi, Convex sets of non-negative matrices, Canadian Journal of
  Mathematics 20 (1968) 144--157.

\end{thebibliography}

\end{document}